\newtheorem{theorem}{Theorem}
\newtheorem{problem}{Problem}
\definecolor{orange}{rgb}{1,0.5,0}
\definecolor{magenta}{RGB}{255,0,255}
\newcommand\Mark[1]{\textsuperscript{#1}}
\title{Can Self-Censorship in News Media \\ be Detected Algorithmically?\\ A Case Study in Latin America}
\author{
\alignauthor
Rongrong Tao\Mark{1}, Baojian Zhou\Mark{2}, Feng Chen\Mark{2}, Naifeng Liu\Mark{2}, David Mares\Mark{3}, \and Patrick Butler\Mark{1}, Naren Ramakrishnan\Mark{1}
\and
\affaddr{\Mark{1} Discovery Analytics Center, Department of Computer Science, Virginia Tech, Arlington, VA, USA}\\
\affaddr{\Mark{2} Department of Computer Science, University at Albany, SUNY, Albany, NY, USA}\\
\affaddr{\Mark{3} University of California at San Diego, San Diego, CA, USA}\\
\affaddr{rrtao@vt.edu, \{bzhou6, fchen5, nliu3\}@albany.edu, dmares@ucsd.edu}\\
\affaddr{pabutler@vt.edu, naren@cs.vt.edu}
}
\def\@copyrightspace{\relax}
\begin{document}
\maketitle

\section*{Abstract}
Censorship in social media has been well studied and provides insight into how governments stifle 
freedom of expression online. Comparatively less (or no) attention has been paid to detecting (self) censorship in traditional media (e.g., news) using social media as a bellweather. We present a novel unsupervised approach that views social media as a sensor to detect censorship in news media wherein
statistically significant differences between information published in the news media and the correlated information published in social media are automatically identified as candidate censored events. We develop a hypothesis testing framework to
identify and evaluate censored clusters of keywords, and a new near-linear-time algorithm
(called \textsc{GraphDPD}) to identify the highest scoring clusters as  indicators of censorship.
We outline 
extensive experiments on semi-synthetic data as
well as real datasets (with Twitter and local
news media) from Mexico and Venezuela, highlighting
the capability to accurately detect real-world
self censorship events.

\section{Introduction}
News media censorship is generally defined as a restriction on freedom of speech
to prohibit access to public information, and is taking place more than ever before. 
According to the Freedom of the Press Report, 40.4 percent of nations fit into the ``free'' category in 2003. By 2014, this global percentage fell to 32 percent
, as shown in Figure 1 \footnote{https://freedomhouse.org/report/freedom-press/freedom-press-2014}. More than 200 journalists were jailed in 2014, according to the Committee to Protect Journalists. In fact, in the past three years, more than 200 journalists have been jailed annually \footnote{http://saccityexpress.com/defending-freedom-of-speech/\#sthash.cbI7lWbw.dpbs}.

One of the responses to this stifling environmental context is self-censorship, i.e.,
the act of deciding not to
publish about certain topics, owing to safety or partisan
reasons.
Although the social and political aspects of news media censorship have been deeply discussed and analyzed in the field of social sciences~\citep{alkazemi2015kuwaiti, cook2013long, seib2016beyond, robinson2013pockets}, there is currently no efficient and effective approach to automatically detect and track self-censorship events in real time.

\begin{figure}[ht!]
\tiny
\centering
\includegraphics[scale=0.4]{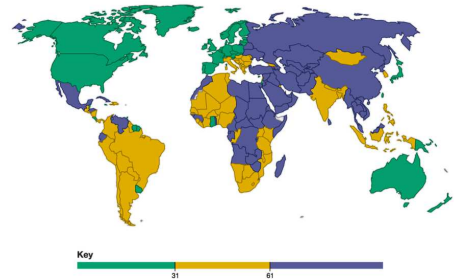}
\caption{\scriptsize Worldwide freedom of the press (2014). 
The higher the score, the worse the press freedom status. }
\label{fig:freedom}
\end{figure}

Social media censorship often takes the form of
active censors identifying offending posts and
deleting them and therefore tracking post deletions
supports the use of supervised learning
approaches~\cite{bamman2012censorship, casilli2012social, chin2014censorship,  king2013censorship}. On the
other hand, censorship in news media typically has
no labeled information and must rely on
unsupervised techniques instead.

In this paper, we present a novel unsupervised approach that views social media as a sensor to detect censorship in news media wherein
statistically significant differences between information published in the news media and the correlated information published in social media are automatically identified as candidate censored events. 

A generalized log-likelihood ratio test (GLRT) statistic can then be formulated for hypothesis testing, and 
the problem of censorship detection can  be cast as the maximization of the GLRT statistic over all possible clusters of keywords. 
We propose \underline{\smash{a near-linear-time algorithm called \textsc{GraphDPD}}} to identify the highest scoring clusters as  indicators of censorship events in the local news media, and further apply randomization testing to estimate the statistical significances of these clusters. 

We consider the detection of censorship in the news media of two countries, Mexico and Venezuela, and utilize Twitter as the uncensored source.

\subsection{Is Twitter a reliable sensor for detecting
censorship?}
Starting in January 2012, a ``Country-Withheld Content'' policy has been launched by Twitter, with which governments are able to request withholding and deletion of user accounts and tweets. At the same time, Twitter started to release a transparency report, which provided worldwide information about such removal requests.
The Transparency Report lists information and removal requests from Year 2012 to 2015 on a half-year basis. Table \ref{table:transparency_report} summarizes the information and removal requests for Year 2014 on nine countries of interest. As shown in Table \ref{table:transparency_report}, for our countries
of interest (viz. Mexico and Venezuela), 
Twitter did not participate in
any social media censorship; therefore, we
believe that Twitter can be considered as a reliable and uncensored source to detect news self censorship events in these two countries. 

The main contributions of this paper are summarized as follows:
\begin{itemize}[leftmargin=*]
\item \textbf{Analysis of censorship patterns between news media and Twitter}: We carried out an extensive analysis of information in Twitter deemed relevant to censored information in news media. In doing so, we make important observations that highlight the importance of our work. 
    \item \textbf{Formulation of an unsupervised censorship detection framework}: We propose a novel hypothesis-testing-based statistical framework for detecting clusters of co-occurred keywords that demonstrate statistically significant differences between the information published in news media and the correlated information published in a uncensored source (e.g., Twitter). 
    To the best of our knowledge, this is the first unsupervised framework for automatic detection of censorship events in news media.  
   
    \item \textbf{Optimization algorithms}: The inference of our proposed framework involves the maximization of a GLRT statistic function over all clusters of co-occurred keywords, which is hard to solve in general. We propose a novel approximation algorithm to solve this problem in nearly linear time. 
    \item \textbf{Extensive experiments to validate the proposed techniques}: We conduct comprehensive experiments on real-world Twitter and local news articles datasets to evaluate our proposed approach. The results demonstrate that our proposed approach outperforms existing techniques in the accuracy of censorship detection. In addition, we perform case studies on the censorship patterns detected by our proposed approach and analyze the reasons behind censorship from real-world data of Mexico and Venezuela during Year 2014. 
\end{itemize}

\begin{small}
\begin{table}
\tiny
\centering
\caption{\scriptsize Summary of Twitter Transparency Report for Year 2014 on nine countries of interest}
\label{table:transparency_report}
\begin{tabular}{| m{1.8cm} | m{1.4cm} | m{1.3cm} | m{1.3cm} |}
\hline
Country       & Account Information Request        & Removal Requests         & Tweets Withheld     \\ \hline
Australia     & 12  & 0   & 0   \\ \hline
Brazil     & 127  & 35   & 101    \\ \hline
Colombia     & 8  & 0   & 0   \\ \hline
Greece     & 19  & 0   & 0   \\ \hline
Japan     & 480  & 6   & 43  \\ \hline
Mexico    & 12    & 0    & 0 \\ \hline
Saudi Arabia  & 220   & 0   & 0   \\ \hline
Turkey   & 380   & 393   & 2003  \\ \hline
Venezuela &4   & 0   & 0 \\ \hline
\end{tabular}
\end{table}
\end{small}

\section{Related Work}
Here is a brief survey of three broad classes of work pertinent to our work. \\ \indent
\textbf{Analysis of the coverage of various topics across social media and news media} has been well established in many studies. 
\citep{petrovic2013can} studies topic and timing overlapping in newswire and Twitter and concludes that Twitter covers not only topics reported by news media during the same time period, but also minor topics ignored by news media. Through analysis of hundreds of news events, \citep{olteanu2015comparing} observes both similarities and differences of coverage of events between social media and news media. In this paper, we uncover indicators of censorship pattern in news media from various interactive patterns between social media and news media.
\\ \indent
\textbf{Event detection} in social media has been studied in many recent works. Watanabe et al. \citep{watanabe2011jasmine} develop a system, which identifies tweets posted closely in time and location and determine whether they are mentions of the same event by co-occurring keywords. Ritter et al. \citep{ritter2012open} presents the first open-domain system for event extraction and an approach to classify extracted events based on latent variable models. Rozenshtein et al. \citep{rozenshtein2014event} formulates event detection in activity networks as a graph mining problem and proposes effective greedy approaches to solve this problem. In addition to textual information, Gao et al. \citep{gao2015multimedia} propose an event detection method which utilizes visual content and intrinsic correlation in social media. \\ \indent
\textbf{Censorship} is a critical problem in many countries across the world and most of the existing studies on censorship analysis are focused on Turkey and China. 
Turkey, which is identified as the country issuing the largest number of censorship requests by Twitter, has been studied for censorship topics by applying topic extraction and clustering on a collection of censored tweets in \citep{tanash2015known}. \citep{coskuntuncel2016privatization} analyzes the relationship between the Turkish government and media companies and reveals that the government exerts control over mainstream media and the flow of information. However, most of the existing approaches are supervised or semi-supervised, which rely on collections of censored posts, and highlight the necessity of unsupervised approaches to uncover self censorship in news media.

\section{Data Analysis}
\label{section:data analysis}
Table \ref{table:parameters} summarizes the notation used in this work. The EMBERS project \citep{ramakrishnan2014beating} provided a collection of Latin American news articles and Twitter posts. The news dataset was sourced from around 6000 news agencies during  2014 across the world. From ``4 International Media \& Newspapers'', we retrieved a list of top newspapers with their domain names in the target country. \textit{News} articles are filtered based on the domain names in the URL links. \textit{Twitter} data was collected by randomly sampling 10\% (by volume) tweets from January 1, 2014 to December 31, 2014. Retweets in \textit{Twitter} were removed as they were not as informative as original tweets. Mexico and Venezuela were chosen as two target countries in this work since they had no censorship in Twitter (as shown in Table \ref{table:transparency_report}) but featured severe censorship in news media (as shown in Fig. \ref{fig:freedom}).
\\ \indent
\subsection{Data Preprocessing}
\label{section:data preprocessing}
The inputs to our proposed approach are keyword co-occurrence graphs. 
Each node represents a keyword associated with four attributes: (1) time-series daily
frequency (TSDF) in \textit{Twitter}, (2) TSDF in \textit{News}, (3) expected daily frequency in \textit{Twitter}, and (4) expected daily frequency in \textit{News}. Each edge represents the co-occurrence of connecting nodes in \textit{Twitter}, or \textit{News}, or both. However, constructing such graphs is not trivial due to data integration. One challenge is to handle the different vocabularies used in \textit{Twitter} and \textit{News}, with underlying distinct
distributions. \\ \indent

To find words that behave differently in \textit{News} comparing to \textit{Twitter}, we only retained keywords which are mentioned in both \textit{Twitter} and \textit{News}. For each keyword, linear correlation between its TSDF in \textit{Twitter} and \textit{News} during Year 2014 is required to be greater than a predefined threshold (e.g. 0.15) in order to guarantee the keyword is well correlated in two data sources. TSDF in \textit{Twitter} and \textit{News} for each node are normalized with quantile normalization. An edge is removed if its weight is less than $\Gamma$, where $\Gamma$ is the threshold used to tradeoff graph sparsity and connectivity. Empirically we found $\Gamma = 10$ to be an effective threshold. A keyword co-occurrence graph for a continuous time window is defined as the maximal connected component from a union of daily keyword co-occurrence graph during the time window.

\subsection{Pattern Analysis}
\label{section:pattern-analysis}
Though many events drive both social media and traditional news media, it's challenging to claim that any deviation between the two is evidence of censorship or different topics of interest. Table \ref{table:categories} summarizes various co-occurring patterns between Twitter and news media that we are able to observe from our real world dataset in Mexico and more details are discussed as follows. \indent

\begin{table*}[h!]
\tiny
     \begin{center}
    \caption{\scriptsize Different patterns of co-occurrence observed between social media and news media sources.}
     \begin{tabular}{ c  c  c   c   c  }
     \toprule
     \pbox{3cm}{Topic is of interest in both social media and news media.}
      &
      \pbox{3cm}{Topic is of interest in social media but not in news media.}
      &
      \pbox{3cm}{Topic is of interest in news media but not in social media.}
      &
      \pbox{3cm}{Censorship in one news media source.}
      &
         \\
    \cmidrule(r){1-1}\cmidrule(lr){2-2}\cmidrule(l){3-3}\cmidrule(l){4-4}
     \raisebox{-\totalheight}{\includegraphics[width=0.15\textwidth, height=20mm]{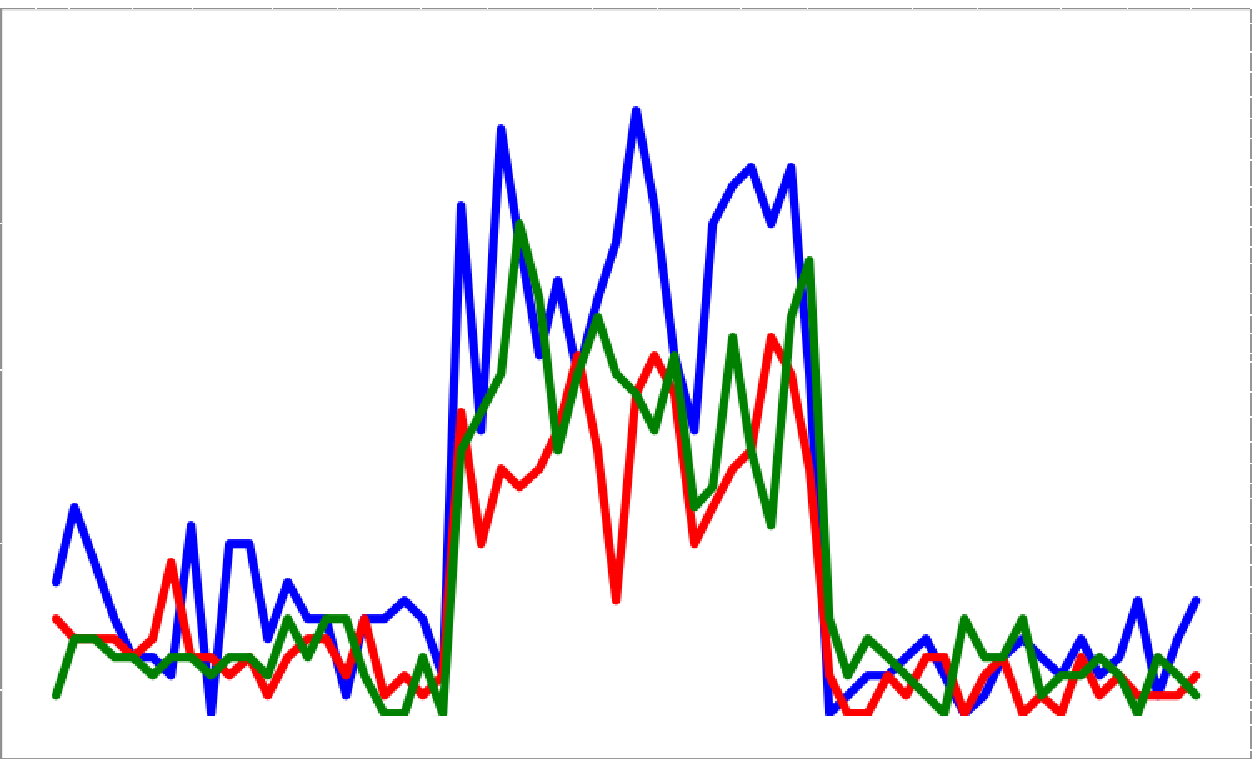}}
      & 
      \raisebox{-\totalheight}{\includegraphics[width=0.15\textwidth, height=20mm]{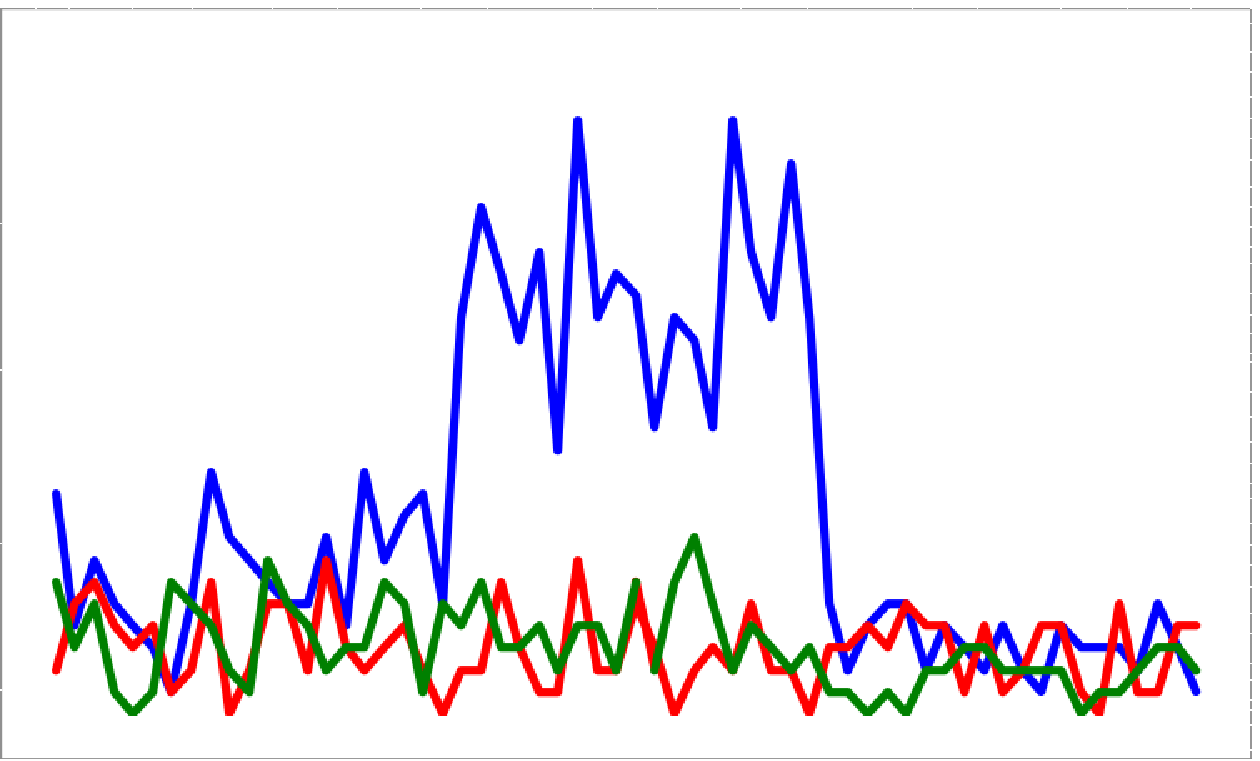}}
      & 
      \raisebox{-\totalheight}{\includegraphics[width=0.15\textwidth, height=20mm]{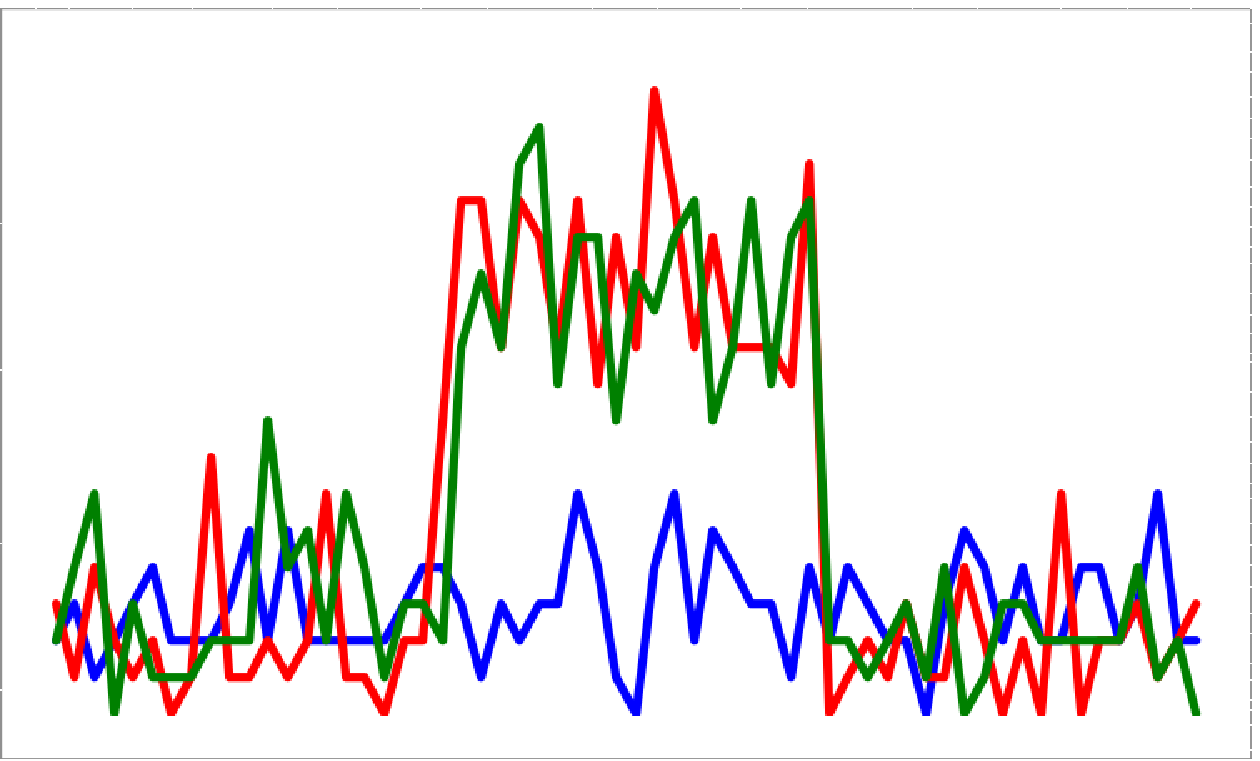}}
      &
      \raisebox{-\totalheight}{\includegraphics[width=0.15\textwidth, height=20mm]{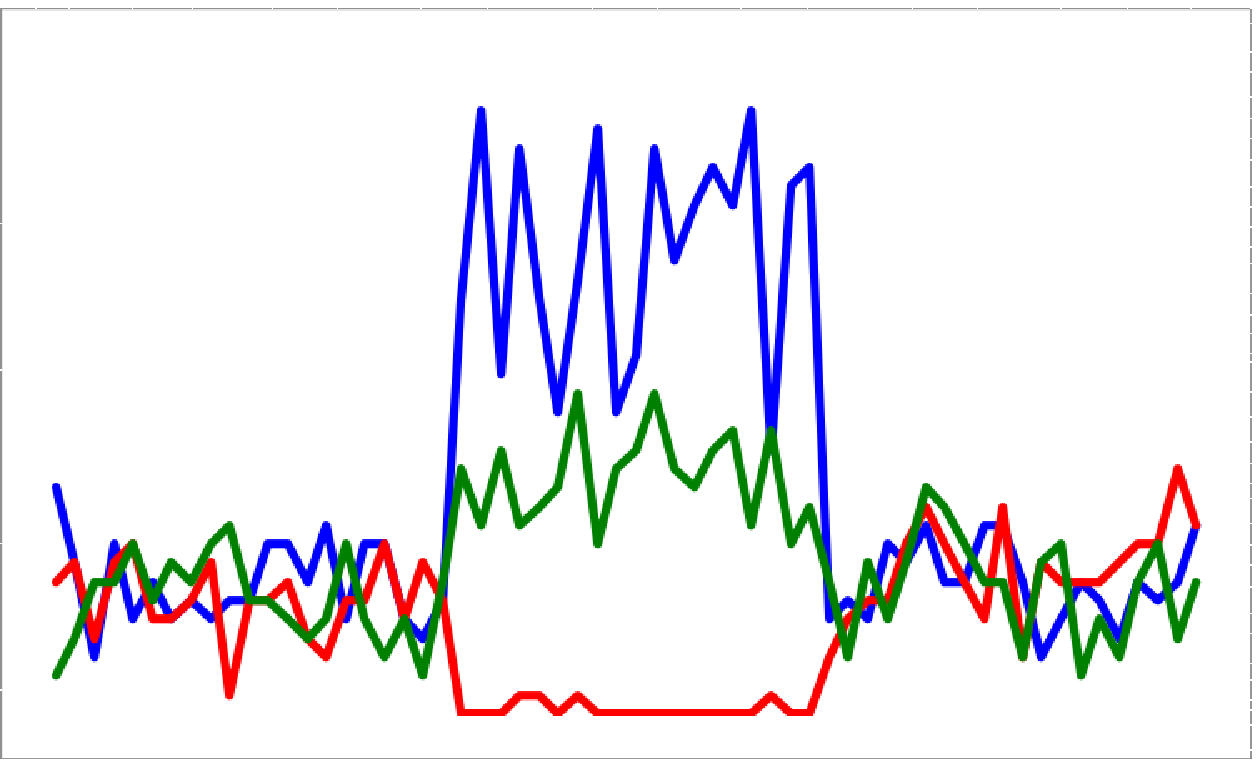}}
      &
      \raisebox{-\totalheight}{\includegraphics[width=0.15\textwidth, height=20mm]{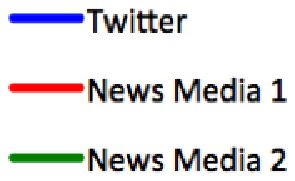}} \\
    \cmidrule(r){1-1}\cmidrule(lr){2-2}\cmidrule(l){3-3}\cmidrule(l){4-4}
    \pbox{3cm}{Example: In early March 2014, Malaysia Airlines Flight MH 370 went missing.}
    &
    \pbox{3cm}{Example: Late June 2014 featured a soccer game between Mexico and Holland as part of the 2014 FIFA World Cup.}
    &
    \pbox{4cm}{Example: In late September 2014, 125 heads of state and governments attended the Global Climate Summit, which was seen as a milestone to a new legal agreement on climate change.}
    &
    \pbox{4cm}{Example: In late September 2014, 43 students from Ayotzinapa Rural Teachers' College went missing in Mexico. This incident has been referred to as the worst human rights crisis Mexico faced since the 1968 massacre of students.}
      
      \\ \bottomrule
      \end{tabular}
      \label{table:categories}
      \end{center}
      \end{table*}

\textbf{Topic is of interest both in social media and news media}: On March 8th, 2014, Malaysia Airlines Flight MH370 disappeared while flying from Malaysia to China;  12 Malaysian crew members and 227 passengers from 15 nations were declared missing. During the following week, we are able to observe sparks in discussions of this incident and mentions of relevant keywords (MH370, Malaysia) across both social media and news media. \\ \indent
\textbf{Topic is of interest only in social media}: From June 28th to 30th 2014, there are many soccer matches held by the 2014 FIFA World Cup, including one game between Mexico and Holland. During this time period, we are able to observe spikes in mentions of relevant keywords (fifa, fútbol, robben, holland, mexicano) across Twitter in Mexico. However, mentions of these keywords in a list of nine Mexican news outlets do not depict significant changes as this is viewed as a general soccer game. \\ \indent
\textbf{Topic is of interest only in news media}: On September 23, 2014, 125 heads of state and governments attended the global Climate Summit, which was seen as a milestone to a new legal agreement on climate change. This incident is widely discussed in news media, while relatively less attention in social media (in Latin
America). \\ \indent
\textbf{Topic is censored in news media}: To illustrate an example anomalous behavior in \textit{News}, Fig. \ref{fig:introexample} compares TSDF in El Mexicano Gran Diario Regional (el-mexicano.com.mx) and TSDF in \textit{Twitter} during a 2-month period on a connected set of keywords sampled from tweets from
Mexico. All the example keywords are relevant to the 43 missing students from Ayotzinapa in the city of Iguala protesting the government’s education reforms. The strong connectivity of these keywords, as shown in Fig. \ref{fig:2e}, guarantees that they are mentioned together frequently in Twitter and local news media. The time region during which anomalous behavior is detected is highlighted with two yellow markers. Since volume of \textit{Twitter} is much larger than volume of \textit{News}, TSDF in Fig. \ref{fig:2a} to Fig. \ref{fig:2d} are normalized to [0, 500] for visualization. Fig. \ref{fig:2a} to Fig. \ref{fig:2d} depict that TSDF in El Mexicano Gran Diario Regional is well correlated with TSDF in \textit{Twitter} except during the highlighted time region, where abnormal absenteeism in El Mexicano Gran Diario Regional can be observed for all example keywords. In order to validate the deviation between TSDF in El Mexicano Gran Diario Regional and TSDF in \textit{Twitter} is not due to difference in topics of interests, we also compare with a number of other local news outlets. Fig. \ref{fig:2a} to Fig. \ref{fig:2d} shows that TSDF in El Universal in Mexico City is consistent with TSDF in \textit{Twitter} and does not depict an abnormal absenteeism during the highlighted time period. Using Twitter and El Universal in Mexico City as sensors, we can conclude an indicator of self-censorship in  El Mexicano Gran Diario Regional with respect to the 43 missing students during the highlighted time region.

Inspired by these observations, we say that a \textbf{censorship pattern} exists if for a cluster of connected keywords, 
\begin{enumerate}
    \item their TSDF in at least one local news media is consistently different from TSDF in \textit{Twitter} during a time period,  
    \item their TSDF in local news media are consistently well correlated to TSDF in \textit{Twitter} before the time period, and
    \item their TSDF in at least one different local news outlet does not depict abnormal absenteeism during the time period.
    \end{enumerate}

\begin{small}
\begin{table}[!ht]
\tiny
\centering
\caption{\scriptsize Description of major notation.}
\label{table:parameters}
\begin{tabular}{| m{1.5cm} | m{5.5cm} |}
\hline
Variable       &  Meaning        \\ \hline
$\{a^t(v)\}_{t=1}^T$      & time series of daily frequency of node $v$ in uncensored Twitter dataset     \\ \hline
$\lambda_a(v)$     & expected daily frequency of node $v$ in the Twitter dataset.       \\ \hline
$\{b^t(v)\}_{t=1}^T$     & time series of daily frequency of node $v$ in the censored news dataset   \\ \hline
$\lambda_b(v)$ & expected daily frequency of node $v$ in data source $b$  \\ \hline
TSDF   & time series of daily frequency     \\ \hline
\end{tabular}
\end{table}
\end{small}

\begin{figure}[!t]
    \centering
    \begin{subfigure}[b]{0.22\textwidth}
            \centering
            \includegraphics[width=\textwidth]{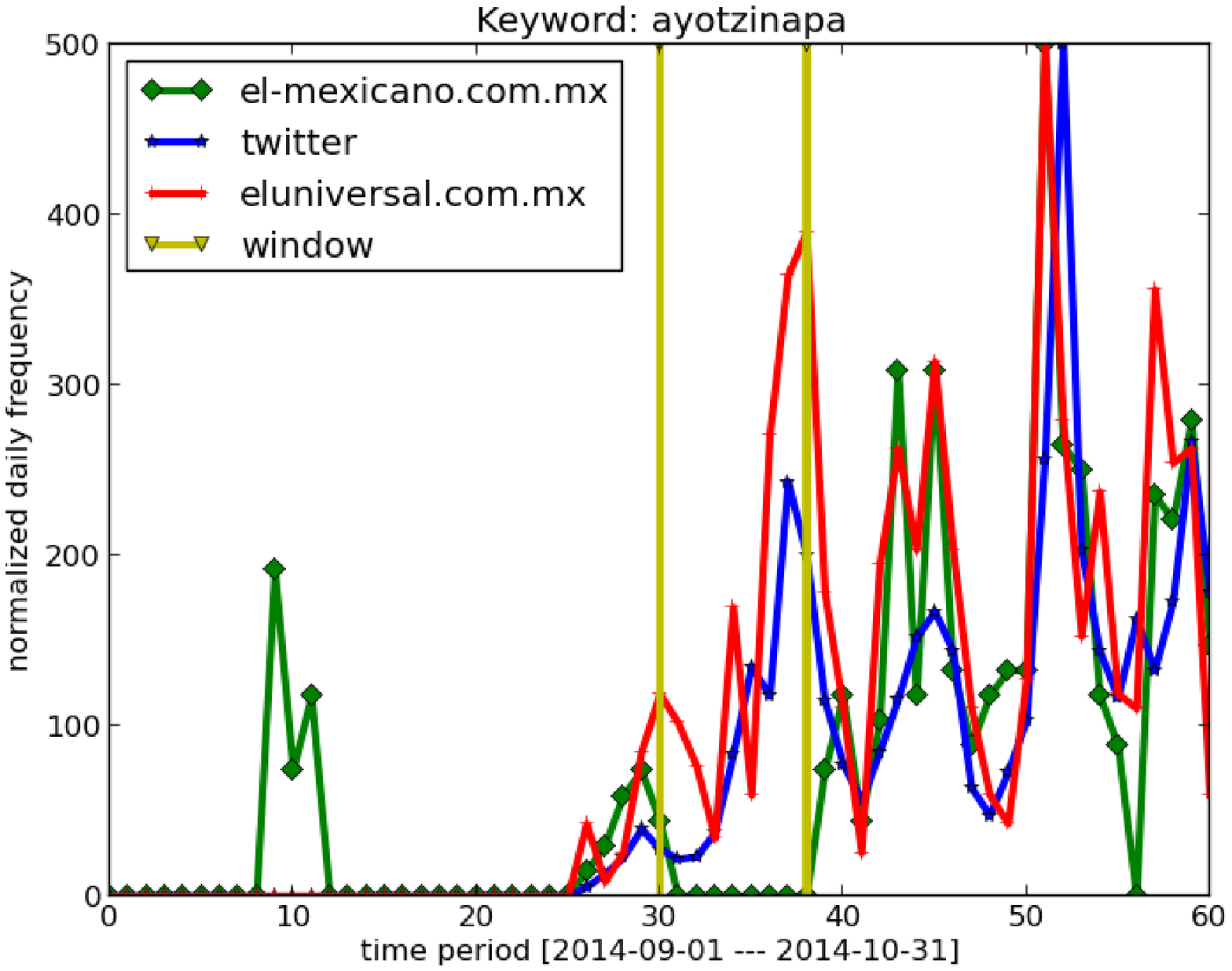}
            \caption{\scriptsize ayotzinapa}
    \label{fig:2a}
    \end{subfigure}
	\begin{subfigure}[b]{0.22\textwidth}
            \centering
            \includegraphics[width=\textwidth]{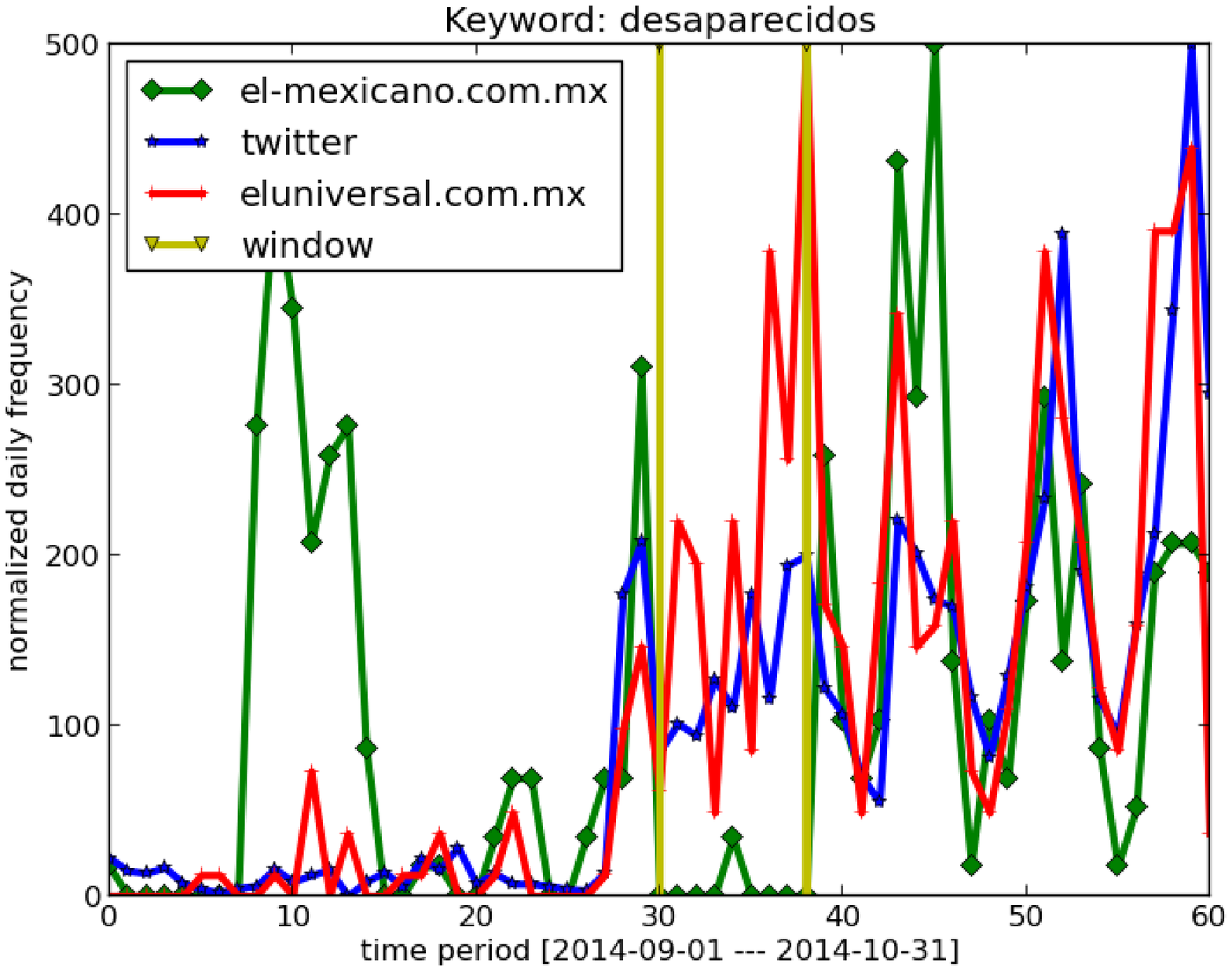}
            \caption{\scriptsize desaparecidos (missing)}
    \label{fig:2b}
    \end{subfigure}
    \begin{subfigure}[b]{0.22\textwidth}
            \centering
            \includegraphics[width=\textwidth]{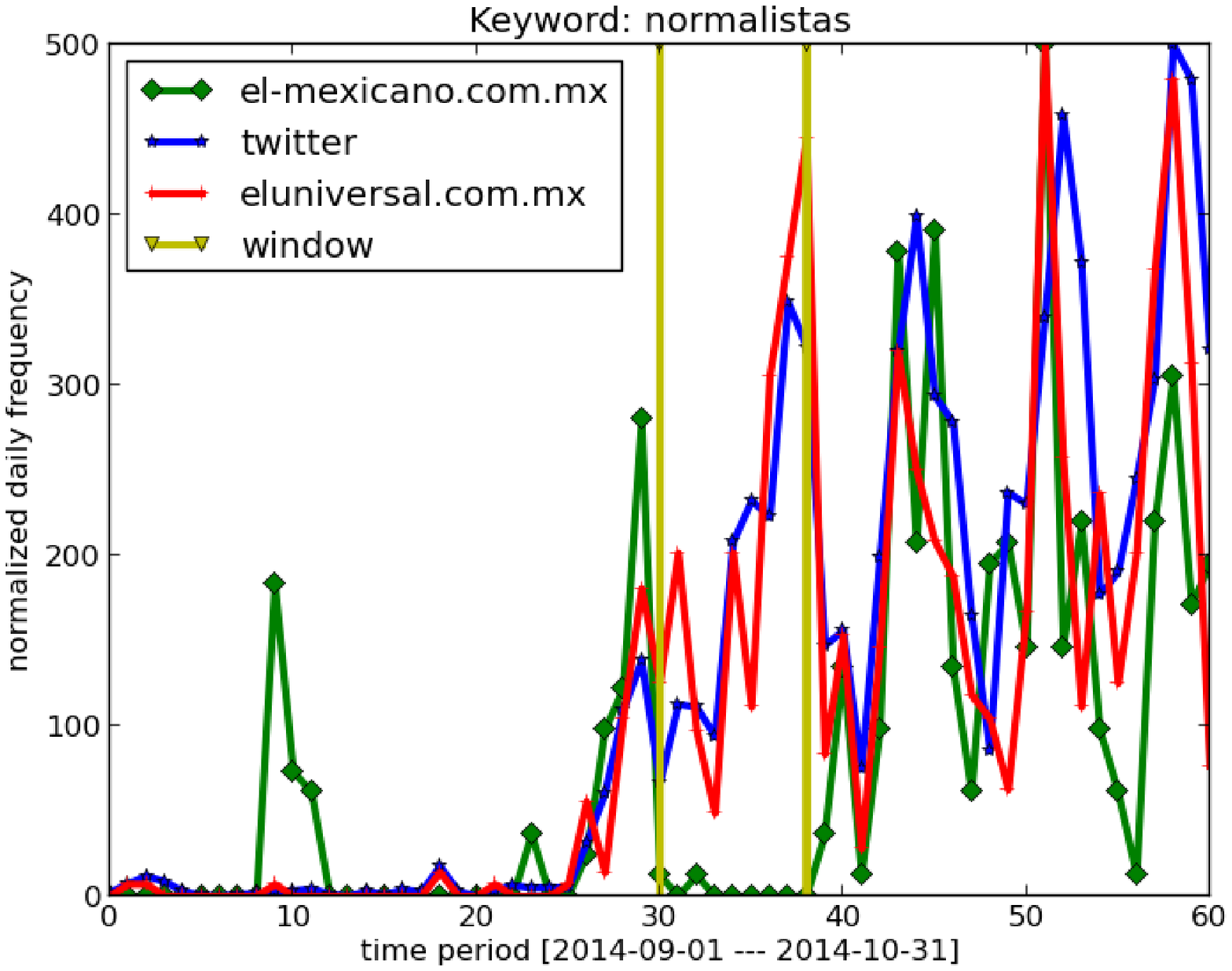}
    \caption{\scriptsize normalistas (students trained to become teachers)}
    \label{fig:2c}
    \end{subfigure}
    \begin{subfigure}[b]{0.22\textwidth}
            \centering
            \includegraphics[width=\textwidth]{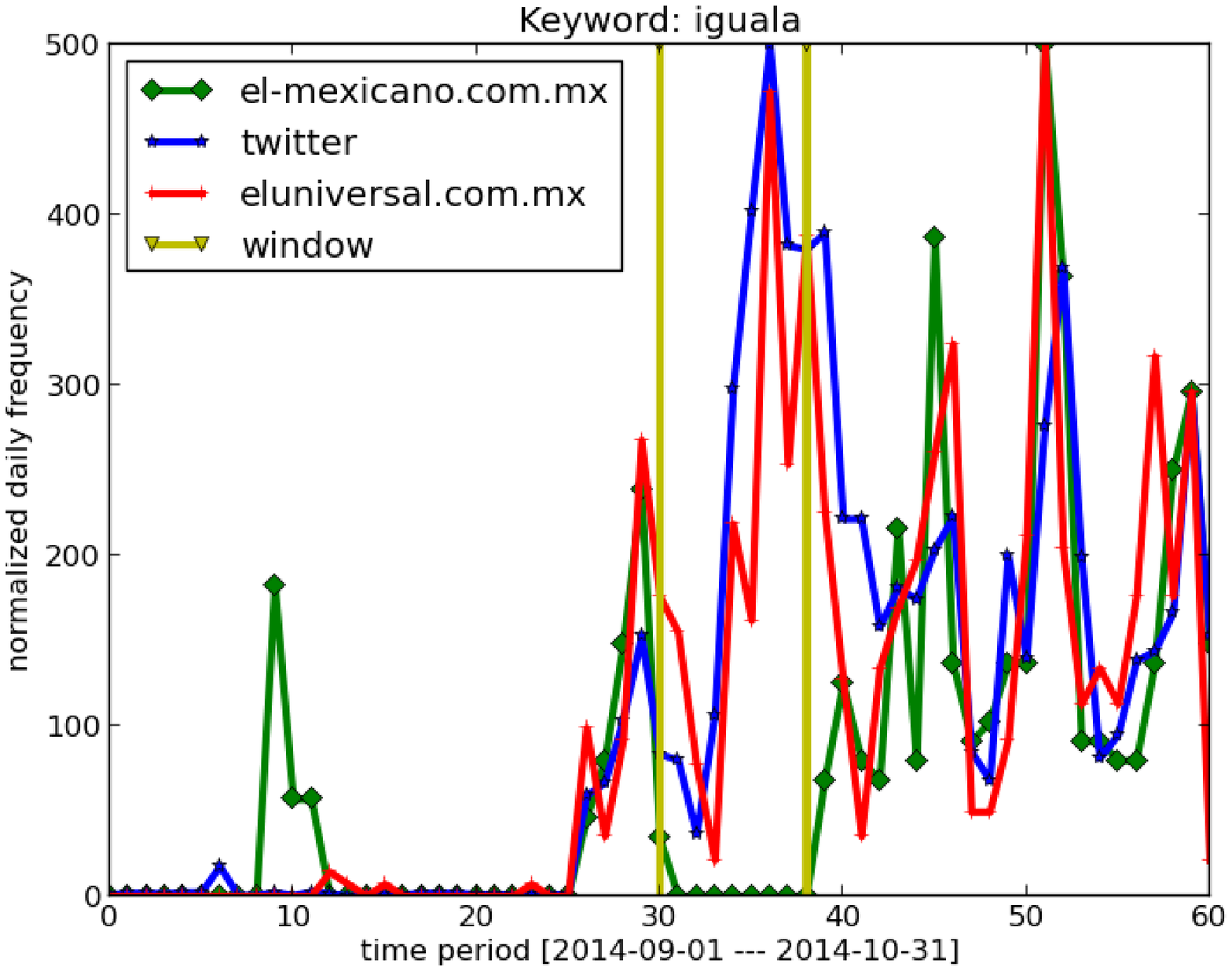}
    \caption{\scriptsize iguala}
    \label{fig:2d}
    \end{subfigure}
    \begin{subfigure}[b]{0.44\textwidth}
            \centering
            \includegraphics[width=\textwidth]{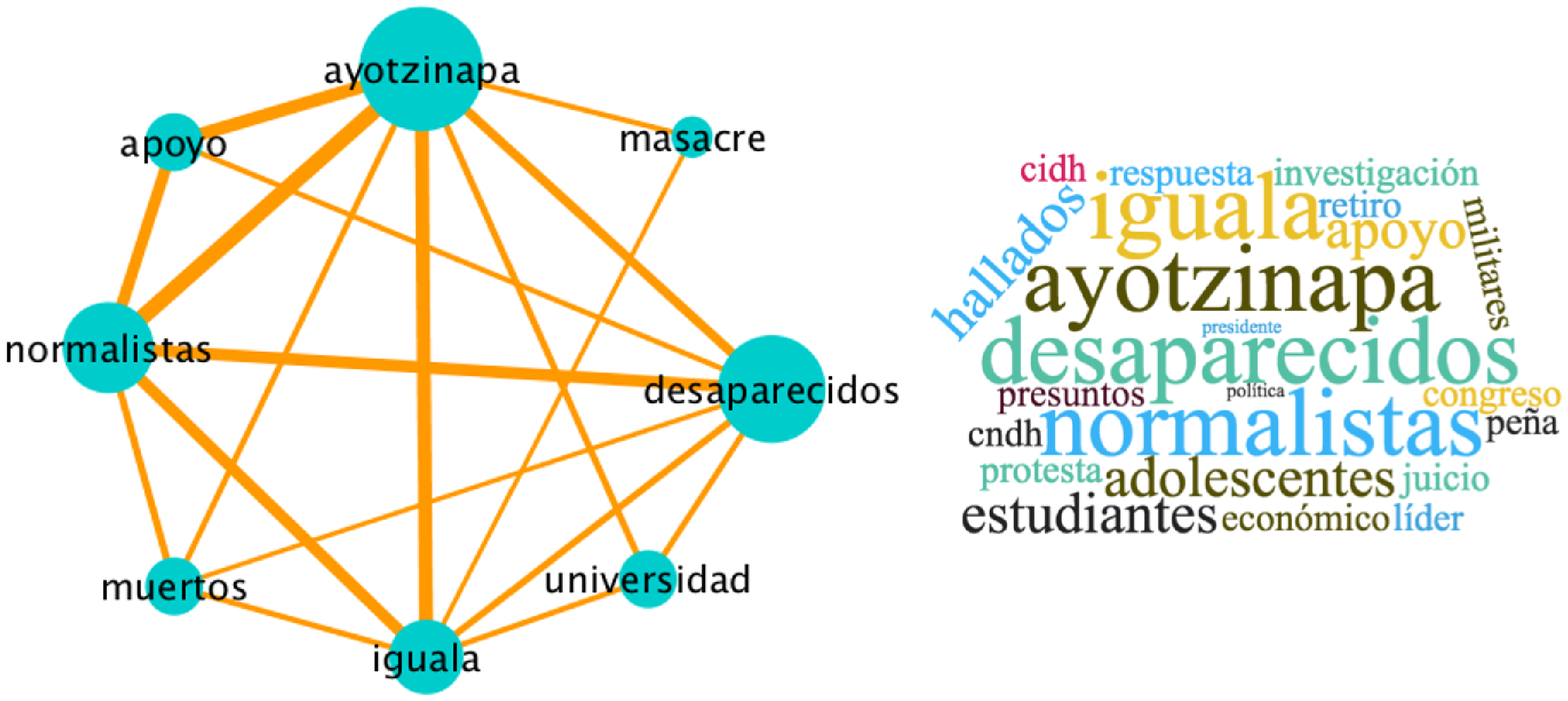}
    \captionsetup{font=small}
    \caption{\scriptsize Left: The strong connectivity of these keywords indicates their frequent co-occurrence in \textit{Twitter} and \textit{News}. A larger size of node indicates higher keyword frequency and a larger width of edge indicates more frequently co-occurrence; Right : word cloud representing censored keywords in \textit{News} around 2014-09-26 in Mexico}
    \label{fig:2e}
    \end{subfigure}
    \captionsetup{font=small}
    \caption{\scriptsize Example TSDF in \textit{News} vs. TSDF in \textit{Twitter} for a set of connected keywords. These keywords are relevant to the 43 missing students from Ayotzinapa Rural Teachers' College on Sep 26th, 2014 in Mexico. We can find consistent censorship pattern in El Mexicano Gran Diario Regional (el-mexicano.com.mx) shortly after the students are missing. 
    }
    \label{fig:introexample}
\end{figure}

\section{Methodology}

This section presents a novel hypothesis testing framework for characterizing the censorship patterns as discussed in Section~\ref{section:data analysis} and  an efficient inference algorithm for automatic detection of such censorship patterns in nearly linear time.
\subsection{Problem Formulation}
\label{subsection: problem formulation}
Suppose we have a dataset of news reports and a dataset of tweets within a shared time period in a country of interest. Each news report or tweet is represented by a set of keywords and is indexed by a time stamp (e.g., day). We model the joint information of news reports and tweets using an undirected keyword co-occurrence graph $\mathbb{G} = (\mathbb{V}, \mathbb{E})$, where $\mathbb{V} = \{1,2,\cdots$ $,n\}$ refers to the ground set of nodes/keywords, $n$ refers to the total number of nodes, and $\mathbb{E} \subseteq \mathbb{V} \times \mathbb{V}$ is a set of edges, in which an edge $(i, j)$ indicates that the keywords i and j co-occur in at least one news report or tweet. 
Each node $v \in \mathbb{V}$ is associated with four attributes: $\{a^t(v)\}_{t=1}^T$, $\lambda_a(v)$, $\{b^t(v)\}_{t=1}^T$, and $\lambda_b(v)$ as defined in Table~\ref{table:parameters}. As our study is based on the analysis of correlations between frequencies of keywords in the news and Twitter datasets, we only consider the keywords whose frequencies in these two datasets are well correlated (with correlations above a predefined threshold 0.15). Our goal is to detect a cluster (subset) of co-occurred keywords and a time window as an indicator of censorship pattern, such that the distribution of frequencies of these keywords in the news dataset is significantly different from that in the Twitter dataset.  

Suppose the chosen time granularity is day and the shared time period is $\{1, \cdots, T\}$. We consider two hypotheses: under the null ($H_0$),  the daily frequencies of each keyword $v$ in the news and Twitter datasets follow two different Poisson distributions with the mean parameters $\lambda_a(v)$ and $\lambda_b(v)$, respectively; under the alternative ($H_1(S,R)$), there is a connected cluster $S$ of  keywords 
and a continuous time window $R \subseteq \{1, \cdots, T\}$, in which the daily frequencies of each keyword $v$ in the Twitter dataset follow a Poisson with an elevated mean parameter $q_a \cdot \lambda_a(v)$, but those in the news dataset follows a Poisson with a down-scaled mean parameter $q_b \cdot \lambda_b(v)$. 
Formally, they can be defined as follows:
\begin{itemize}[leftmargin=*]
\item Null hypothesis $H_0$: 
\small
\begin{eqnarray} 
a^t(v) \sim \text{Pos}(\lambda_a(v)),
 \forall v \in \mathbb{V}, t \in \{1, \cdots, T\} \nonumber \\
 b^t(v) \sim \text{Pos}(\lambda_b(v)), \forall v \in \mathbb{V}, t \in \{1, \cdots, T\} \nonumber 
 \end{eqnarray}
 \normalsize
\item Alternative hypothesis $H_1(S, R)$:  
\small
\begin{align}
\noindent a^t(v) &\sim \text{Pos}(q_a \cdot \lambda_a(v)) \ , \  b^t(v) \sim \text{Pos}(q_b \cdot \lambda_b(v)), \forall v \in S , t \in R  \nonumber   \\ 
a^t(v) &\sim \text{Pos}(\lambda_a(v)), \ \  \ \ \ \ \  b^t(v) \sim \text{Pos}( \lambda_b(v)), \forall v \notin S \text{ or }
t \notin R \nonumber
\end{align}
\normalsize
\end{itemize}
where $q_a > 1, q_b < 1$, $S\subseteq \mathbb{V}$, the subgraph induced by $S$ (denoted as $\mathbb{G}_S$) must be connected to ensure that these keywords are semantically related, and $R \subseteq \{1,2,\cdots,T\}$ is a continuous time window defined as $\{i,i+1,\cdots,j\}, 1 \leq i \leq j \leq T$. Given the Poisson probability mass function denoted as $p(x;\lambda) = {\lambda^x e^{-\lambda}} / {x!}$, a generalized log likelihood ratio test (GLRT) statistic can then be defined to compare these two hypotheses, and has the form: 
\begin{small}
\begin{eqnarray}
F(S, R) = \log \frac{\max_{q_a > 1}\prod_{t \in R}\prod_{v \in S} p(a^t(v); q_a \lambda_a(v) ) }{\prod_{t \in R}\prod_{v \in S} p(a^t(v); \lambda_a(v) ) }\nonumber \\
+ \log \frac{\max_{q_b < 1}\prod_{t \in R}\prod_{v \in S} p(b^t(v); q_b \lambda_b(v) ) }{\prod_{t \in R}\prod_{v \in S} p(b^t(v); \lambda_b(v) ) }. 
\label{equation: F(S,R)}
\end{eqnarray}
\end{small}
In order to maximize the \textsc{GLRT} statistic, we need to obtain the maximum likelihood estimates of $q_a$ and $q_b$, which we set ${\partial F(S,R)} / {\partial q_a} = 0$ and ${\partial F(S,R)} / {\partial q_b} = 0$, respectively and get the best estimate $\hat{q_a} = C_a / B_a $ of $q_a$ and $\hat{q_b} = C_b / B_b $ of $q_b$ where $C_a = \sum_{v \in S, t \in R} a^t(v)$, $C_b =\sum_{v \in S, t \in R} b^t(v)$, $B_a = \sum_{v \in S, t \in R} \lambda_a(v)$, $B_b = \sum_{v \in S, t \in R} \lambda_b(v)$. Substituting $q_a$ and $q_b$ with the best estimations $\hat{q_a}$ and $\hat{q_b}$, we obtain the parametric form of the \textsc{GLRT} statistic as follows:
\begin{small}
\begin{equation}
F(S, R) = \Big( C_a \log \frac{C_a}{B_a} + B_a - C_a \Big) + \Big( C_b \log \frac{C_b}{B_b} + B_b - C_b \Big)
\label{equation: score function}
\end{equation}
\end{small}
Given the GLRT statistic $F(S,R)$, the problem of censorship detection can be reformulated as Problem 1 that is composed of two major components: 1) \textbf{Highest scoring clusters detection.} The highest scoring clusters are identified by maximizing the GLRT statistic $F(S,R)$ over all possible clusters of keywords and time windows;
2) \textbf{Statistical significance analysis.} The empirical p-values of the identified clusters are estimated via a randomization testing procedure~\citep{neill2009empirical}, and are returned as significant indicators of censorship patterns in the news dataset, if their p-values are below a predefined significance level (e.g., 0.05).

\begin{problem}{\textbf{(\textsc{GLRT} Optimization Problem)}}
\label{definition: problem1}
Given a keyword co-occurrence graph $\mathbb{G}(\mathbb{V},\mathbb{E})$ and a predefined significance level $\alpha$, the \textsc{GLRT} optimization problem is to find the set of highest scoring and significant clusters $\mathbb{O}$. Each  cluster in $\mathbb{O}$ is denoted as a specific pair of connected subset of keywords ($S_i \subseteq \mathbb{V}$) and continuous time window ($R_i \subseteq \{1, \cdots, T\}$), in which $S_i$ is the highest scoring subset within the time window $R_i$: 

\begin{small}
\begin{eqnarray}
\max\nolimits_{S \subseteq \mathbb{V}} F(S, R_i) {\text{ s.t. }} S \text{ is connected},
\label{equation: optimization LRT}
\end{eqnarray}
\end{small}

\noindent and is significant with respect to the  significance level $\alpha$. 
\end{problem}

\subsection{\textsc{GraphDPD} Algorithm}
\label{subsection: algorithm1}
Our proposed algorithm \textsc{GraphDPD} decomposes Problem~\ref{definition: problem1} into a set of  sub-problems, each of which has a fixed continuous time window, as 
shown in Algorithm~\ref{algorithm:Graph-DPD}. 
For each specific day $i$ (the first day of time window $R$ in Line 6) and each specific day $j$ (the last day of time window $R$ of Line 6), we solve the sub-problem (Line 7) with this specific $R= \{i,i+1,\cdots,j\}$ using $\textsc{Relaxed-GrapMP}$ algorithm which will be elaborated later. For each connected subset of keywords $S$ returned by $\textsc{Relaxed-GraphMP}$, its p-value is estimated by randomization test procedure~\cite{neill2009empirical}(Line 8). The pair $(S, R)$ will be added into the result set $\mathbb{O}$ (Line 9) if its empirical p-value is less than a predefined significance level $\alpha$ (e.g., 0.05). The procedure getPValue in Line 8 refers to a randomization testing procedure based on the input graph $\mathbb{G}$ to calculate the empirical p-value of the pair $(S, R)$~\citep{neill2009empirical}. Finally, we return the set $\mathbb{O}$ of signifiant clusters as indicators of censorship events in the news data set.
\begin{algorithm}[ht]
\caption{\textsc{GraphDPD}}
\begin{algorithmic}[1]
 \State \textbf{Input}: Graph Instance $\mathbb{G}$ and significant level $\alpha$;
 \State \textbf{Output}: set of anomalous connected subgraphs $\mathbb{O}$;
 \State $\mathbb{O} \leftarrow \emptyset$;
 \For{ $i \in \{1,\cdots,T\}$}
 \For{ $j \in \{i+1,\cdots,T\}$}
 \State $R \leftarrow \{i,i+1,\cdots,j\}$ ;\ {//} \textit{time window $R$} 
 \State $S \leftarrow \textsc{Relaxed-GraphMP}(\mathbb{G},R)$;
\If{getPValue($\mathbb{G}$, $S$, $R$) $ \leq \alpha$}
\State $\mathbb{O} \leftarrow \mathbb{O} \cup (S, R)$;
\EndIf
\EndFor
\EndFor
\State \textbf{return} $\mathbb{O}$;    \end{algorithmic}
\label{algorithm:Graph-DPD}
\end{algorithm}
Line 7 in Algorithm~\ref{algorithm:Graph-DPD} aims to solve an instance of Problem~\ref{definition: problem1} given a specific time window $R$, which is a set optimization problem subject to a connectivity constraint. Tung-Wei et. al. \cite{kuo2015maximizing} proposed an approach for maximizing submodular set function subject to a connectivity constraint on graphs. However, our objective function $F(S,R)$ is non-submodular as shown in Theorem~\ref{nonsubmodular-proof} and this approach is not applicable here.
\begin{theorem}
\label{nonsubmodular-proof}
Given a specific window $R$, our objective function $F(S,R)$ defined in~(\ref{equation: score function}) is non-submodular.
\end{theorem}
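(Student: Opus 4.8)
The plan is to refute submodularity directly by exhibiting one explicit instance on which the diminishing-returns inequality fails; non-submodularity is a one-counterexample statement, and I will in fact use only connected subsets. Recall that $F(\cdot,R)$ is submodular iff $F(A\cup\{x\})-F(A)\ge F(B\cup\{x\})-F(B)$ for all $A\subseteq B\subseteq\mathbb{V}$ and $x\notin B$, equivalently iff $F(A)+F(B)\ge F(A\cup B)+F(A\cap B)$ for all $A,B\subseteq\mathbb{V}$.

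First I would reduce $F$ to a single term: writing $g(C,B):=C\log(C/B)+B-C$, the closed form~(\ref{equation: score function}) is $F(S,R)=g(C_a,B_a)+g(C_b,B_b)$, and choosing the news attributes so that $b^t(v)=\lambda_b(v)$ for every node forces $C_b=B_b$ for all $S$, hence $g(C_b,B_b)\equiv 0$ and $F(S,R)=g(C_a,B_a)$. The relevant structure of $g$ is that on the positive orthant it is the perspective of the strictly convex map $x\mapsto x\log x+1-x$; concretely $\partial^2 g/\partial C^2=1/C>0$ whereas the cross term $\partial^2 g/\partial C\,\partial B=-1/B<0$. Because adding a keyword increases $C_a$ \emph{and} $B_a$ simultaneously (a ``diagonal'' move), the sign of the second difference of $F$ is a genuine competition between these two curvatures, and a counterexample must be tuned so that the positive own-curvature term prevails — this is what makes two co-occurring, strongly over-represented keywords reinforce rather than dilute one another.

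Concretely I would fix $R$ to be a single day, take three keywords $v_1,v_2,v_3$ arranged as the path $v_1-v_2-v_3$ (so that $\{v_2\}$, $\{v_1,v_2\}$, $\{v_2,v_3\}$ and $\{v_1,v_2,v_3\}$ are all connected), and set $\lambda_a(v_i)=1$ for each $i$ with $a^1(v_1)=a^1(v_3)=10$ and $a^1(v_2)=1$. Then $F(\{v_2\})=g(1,1)=0$, $F(\{v_1,v_2\})=F(\{v_2,v_3\})=g(11,2)$, and $F(\{v_1,v_2,v_3\})=g(21,3)$. A one-line evaluation gives $g(21,3)>2\,g(11,2)$, i.e.
\[
F(\{v_1,v_2,v_3\})+F(\{v_2\})\;>\;F(\{v_1,v_2\})+F(\{v_2,v_3\}),
\]
which is exactly the failure of the second-difference inequality for $A=\{v_1,v_2\}$, $B=\{v_2,v_3\}$ (with $A\cap B=\{v_2\}$, $A\cup B=\{v_1,v_2,v_3\}$); equivalently, the marginal value of $v_3$ is strictly larger when added to $\{v_1,v_2\}$ than when added to $\{v_2\}$. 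Hence $F(\cdot,R)$ is non-submodular, and repeating the per-day values across any window gives the same conclusion for every continuous $R$.

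The hard part is conceptual rather than computational: one must notice that, although $g$ itself has decreasing differences in the coordinates $(C_a,B_a)$ (its cross-partial is negative, so $g$ is submodular as a function on $\mathbb{R}^2_{>0}$), composing it with the nonnegative linear aggregation $S\mapsto(C_a,B_a)$ destroys submodularity precisely because of the strictly positive own-curvature in $C_a$ — and then pick node weights making this manifest as a strict violation. The remaining issues are routine bookkeeping: checking that the subsets used are connected in the chosen graph (handled by the path), and, if one wants the instance to be a genuine GLRT configuration rather than merely a formal one, checking that the maximum-likelihood estimates $\hat q_a=C_a/B_a$ and $\hat q_b=C_b/B_b$ stay in their feasible ranges $q_a\ge 1$, $q_b\le 1$; with the values above, $C_a/B_a\in\{1,\,11/2,\,7\}\ge 1$ and $C_b/B_b\equiv 1\le 1$, so both are satisfied and the displayed inequality is immediate.
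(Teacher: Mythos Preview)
The paper states Theorem~\ref{nonsubmodular-proof} without proof: immediately after the theorem environment it moves on to describing \textsc{Relaxed-GraphMP}, and no argument (or pointer to an appendix) is supplied. So there is nothing to compare against on the paper's side.

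Your argument is correct. Setting $b^t(v)=\lambda_b(v)$ kills the second summand in~(\ref{equation: score function}), reducing $F$ to $g(C_a,B_a)$ with $g(C,B)=C\log(C/B)+B-C$; on the path $v_1\!-\!v_2\!-\!v_3$ with $\lambda_a\equiv 1$, $a^1(v_1)=a^1(v_3)=10$, $a^1(v_2)=1$, the required inequality $g(21,3)>2\,g(11,2)$ is equivalent (after cancelling the $-18$ on both sides) to $21\log 7>22\log 5.5$, i.e.\ $7^{21}>5.5^{22}$, which holds. Hence $F(A\cup B)+F(A\cap B)>F(A)+F(B)$ for $A=\{v_1,v_2\}$, $B=\{v_2,v_3\}$, violating submodularity. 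The connectivity of all four subsets on the path graph is clear, and your check that $\hat q_a\ge 1$, $\hat q_b\le 1$ on every subset means the counterexample is consistent with the constrained GLRT interpretation in~(\ref{equation: F(S,R)}) as well as with the closed form~(\ref{equation: score function}). Since the paper leaves the theorem unproved, your counterexample in fact fills a genuine gap in the exposition.
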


We propose a novel algorithm named \textsc{Relaxed-GraphMP} to approximately solve  Problem \ref{definition: problem1} in nearly linear time with respect to the total number of nodes in the graph. 
We first transform the GLRT statistic in Equation(\ref{equation: score function}) to a vector  form. Let $\bf x$ be an n-dimensional vector $(x_1,x_2,\cdots,x_n)^{\mathsf{T}}$, where $x_i \in \{0,1\}$ and $x_i = 1$ if $i \in S$, $x_i = 0$ otherwise. We define  $\mathcal{P},\mathcal{Q},\Lambda_a, \Lambda_b$ as follows:

\begin{small}
\begin{align}
\mathcal{P} = \Bigg[\sum_{t\in R} a^t(1),\cdots,\sum_{t\in R} a^t(n) \Bigg]^{\mathsf{T}} , \Lambda_a &= [\lambda_{a}(1),\cdots,\lambda_{a}(n)]^{\mathsf{T}}, \nonumber \\
\mathcal{Q} = \Bigg[\sum_{t\in R} b^t(1),\cdots,\sum_{t\in R} b^t(n) \Bigg]^{\mathsf{T}}, \Lambda_b &= [\lambda_{b}(1),\cdots,\lambda_{b}(n)]^{\mathsf{T}}. \nonumber
\end{align}
\end{small}

\noindent Therefore, $C_a$, $C_b$, $B_a$, and $B_b$ in Equation(\ref{equation: score function}) can be reformulated as follows: 
\begin{small}
\begin{align}
C_a &= \mathcal{P}^{\mathsf{T}}{\bf x}, \ \ C_b = \mathcal{Q}^{\mathsf{T}} {\bf x}, \ \ B_a = |R| {\Lambda_a}^{\mathsf{T}} {\bf x}, \ \ B_b = |R| {\Lambda_b}^{\mathsf{T}} {\bf x} \nonumber
\end{align}
\end{small}
Hence, $F$ can be reformulated as a relaxed function $\hat{F}$: 
\begin{small}
\begin{flalign}
\label{equation: relaxedF}
\hat{F}({\bf x},R) &= \mathcal{P}^{\mathsf{T}}{\bf x} \log \frac{\mathcal{P}^{\mathsf{T}}{\bf x}}{|R|{\Lambda_a}^\mathsf{T} {\bf x}} + {|R|{\Lambda_a}^\mathsf{T} {\bf x} - \mathcal{P}^{\mathsf{T}} {\bf x}} \nonumber \\
& + \mathcal{Q}^{\mathsf{T}}{\bf x} \log \frac{\mathcal{Q}^{\mathsf{T}}{\bf x}}{|R|{\Lambda_b}^\mathsf{T} {\bf x}} + {|R|{\Lambda_b}^\mathsf{T} {\bf x} - \mathcal{Q}^{\mathsf{T}} {\bf x}}
\end{flalign}
\end{small}

\noindent 
We relax the discrete domain $\{0,1\}^n$ of $S$ to the  continuous domain $[0,1]^n$ of ${\bf x}$, and obtain  the relaxed version of Problem~\ref{definition: problem1} as described in Problem~\ref{problem: problem2}.  

\begin{problem}{\textbf{Relaxed \textsc{GLRT}
\label{problem: problem2}
Optimization Problem}} Let $\hat{F}$ be a continuous surrogate function of $F$ that is defined on the relaxed domain $[0,1]^n$ and is identical to $F(S,R)$ on the discrete domain $\{0,1\}^n$. The relaxed form of \textbf{\textsc{GLRT} Optimization Problem} is defined the same as the \textsc{GLRT} optimization problem, except that, for each pair $(S_i, R_i)$ in $\mathbb{O}$, the subset of keywords $S_i$ is identified by solving the following problem with $S_i = \text{supp}(\hat{\bf x})$: 
\begin{eqnarray}
    \hat{\bf x} = \arg\max_{{\bf x} \in [0,1]^n} \hat{F}({\bf x}, R_i)\ \  s.t.\ \  \text{supp}({\bf x}) \  \text{is connected.}\nonumber 
\end{eqnarray}
where $\text{supp}({\bf x}) = \{i | x_i \ne 0\}$ is the support of ${\bf x}$. The gradient of ${\hat{F}}({\bf x},R)$ has the form: 
\begin{small}
\begin{flalign}
    \frac{\partial \hat{F}({\bf x},R)}{\partial {\bf x}}
    &= \log \frac{\mathcal{P}^{\mathsf{T}}{\bf x}}{|R|{\Lambda_a}^{\mathsf{T}}{\bf x}}\mathcal{P} + \Big( |R| - \frac{\mathcal{P}^{\mathsf{T}} {\bf x}}{{\Lambda_a}^{\mathsf{T}}{\bf x}}\Big){\Lambda_a} \nonumber \\
    &+ \log \frac{\mathcal{Q}^{\mathsf{T}}{\bf x}}{|R|{\Lambda_b}^{\mathsf{T}}{\bf x}}\mathcal{Q} + \Big( |R| - \frac{\mathcal{Q}^{\mathsf{T}} {\bf x}}{{\Lambda_b}^{\mathsf{T}}{\bf x}}\Big)\Lambda_b
    \label{equation: gradient}
\end{flalign}
\end{small}
\end{problem}

\begin{algorithm}
\caption{\textsc{Relaxed-GraphMP}}
\begin{algorithmic}[1]
 \State \textbf{Input}: Graph instance $\mathbb{G}$, continous time window $R$;
 \State \textbf{Output}: the co-occurrence subgraph $\mathbb{G}_S$;
  \State $i \leftarrow 0$; ${\bf x}^i \leftarrow \text{an initial vector}$;
  \Repeat 
  \State $\nabla \hat{F}({\bf x}^i,R) \leftarrow \frac{\partial {\hat{F}}({{\bf x}^i},R)}{\partial{\bf x}^i}\text{ by Equation~(\ref{equation: gradient})}$ ;
  \State ${\bf g} \leftarrow \textbf{Head}(\nabla \hat{F}({\bf x}^i,R),\mathbb{G})$;\ {//} \textit{Head projection step} 
  \State $\Omega \leftarrow \text{supp}({\bf g})\cup \text{supp}({\bf x}^i)$;
  \State ${\bf b} \leftarrow \arg \max_{{\bf x} \in [0, 1]^n} \hat{F}({\bf x},R) \ \ s.t. \ \  \text{supp}({\bf x}) \subseteq \Omega$; 
  \State ${\bf x}^{i+1} \leftarrow \textbf{Tail}({\bf b},\mathbb{G})$;\ {//} \textit{Tail projection step}
  \State $i\leftarrow i+1, S \leftarrow \text{supp}({\bf x}^i)$;
  \Until{halting condition holds;}
  \State \textbf{return} $(S, R)$;
  \end{algorithmic}
  \label{algorithm:GraphMP}
\end{algorithm}

 Our proposed algorithm \textsc{Relaxed-GraphMP} decomposes Problem~\ref{problem: problem2} into two sub-problems that are easier to solve: 1) a single utility maximization problem that is independent of the connectivity constraint; and 2) head projection and tail projection problems \cite{hegde2015nearly} subject to connectivity constraints. We call our method \textsc{Relaxed-GraphMP} which is analogous to \textsc{GraphMP} proposed by Chen et al.~\cite{chengeneralized}.
The high level of \textsc{Relaxed-GraphMP} is shown in Algorithm~\ref{algorithm:GraphMP}. It contains 4 main steps as described below.
\begin{itemize}[leftmargin=*]
\item \textbf{Step 1:} Compute the gradient of relaxed \textsc{GLRT} problem (Line 5). The calculated gradient is $\nabla \hat{F}({\bf x}^i,R)$. Intuitively, it maximizes this gradient with connectivity constraint that will be solved in next step.
\item \textbf{Step 2:} Compute the head projection (Line 6). This step is to find a vector ${\bf g}$ so that the corresponding subset $\text{supp}({\bf g})$ can maximize the norm of the projection of gradient $\nabla \hat{F}({\bf x}^i,R)$ ( See details  in \cite{hegde2015nearly}).
\item \textbf{Step 3:} Solve the maximization problem without connectivity constraint. This step (Line 7,8) solves the maximization problem subject to the $\text{supp}({\bf x}) \subseteq \Omega$, where $\Omega$ is the union of the support of the previous solution $\text{supp}({\bf x}^i)$ with the result of head projection $\text{supp}({\bf g})$ (Line 7). A gradient ascent based method is proposed to solve this problem. Details is not shown here due to space limit. 
\item \textbf{Step 4:} Compute the tail projection (Line 9). This final step is to find a subgraph $\mathbb{G}_S$ so that ${\bf b}_S$ is close to ${\bf b}$ but with connectivity constraint. This tail projection guarantees to find a subgraph $\mathbb{G}_S$ with constant approximation guarantee (See details in \cite{hegde2015nearly}). 
\item \textbf{Halting:}  The algorithm terminates when the condition holds. Our algorithm returns a connected subgraph $\mathbb{G}_S$ where the connectivity of $\mathbb{G}_S$ is guaranteed by Step 4.
\end{itemize}


\noindent \textbf{Time Complexity Analysis:} The \textsc{GraphDPD} algorithm is efficient as its time complexity is proportional to the total number of continous time windows $T^2$. Therefore, the time complexity of \textsc{GraphDPD} is mainly dependent on the run time of \textsc{Relaxed-GraphMP}. We give the detailed time complexity analysis in Theroem~\ref{theorem: timeComplexity}.
\begin{theorem} \textsc{GraphDPD} runs in $O(T^2\cdot t( n T + n l + |\mathbb{E}|{\log^3 n})) $ time, where $T$ is the maximal time window size, $nT$ is the time complexity of Line 5 in Algorithm~\ref{algorithm:Graph-DPD}, $nl$ is the run time of Line 8 using gradient ascent, $|\mathbb{E}|{\log^3 n}$ is the total run time of head projection and tail projection algorithms, and $t$ is the total number of iterations needed in Algorithm~\ref{algorithm:GraphMP}.
\label{theorem: timeComplexity}
\end{theorem}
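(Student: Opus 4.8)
The bound is obtained by a straightforward accounting argument: we charge the cost of Algorithm~\ref{algorithm:Graph-DPD} to its doubly-nested loop, and inside the loop charge everything to one invocation of Algorithm~\ref{algorithm:GraphMP}. First I would observe that the pair of loops over the first day $i$ and the last day $j$ of the window enumerates exactly $\binom{T}{2}=O(T^2)$ continuous windows $R\subseteq\{1,\dots,T\}$, so it suffices to bound the work done for a single fixed $R$ and multiply by $T^2$. For a fixed $R$ the loop body runs \textsc{Relaxed-GraphMP} once (Line~7) and calls \texttt{getPValue} once (Line~8); since the latter is a randomization-testing routine invoked as a black box on the already-built graph $\mathbb{G}$, the running time per window is governed by \textsc{Relaxed-GraphMP}, which I analyze next.

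\noindent\textbf{Cost of one call to \textsc{Relaxed-GraphMP}.} The \textbf{repeat}/\textbf{until} loop iterates $t$ times, so I would bound the cost of a single iteration and multiply by $t$. Within one iteration: (i) evaluating the gradient $\nabla\hat F({\bf x}^i,R)$ in Line~5 via Equation~(\ref{equation: gradient}) requires assembling the aggregated vectors $\mathcal{P}$ and $\mathcal{Q}$, each a coordinate-wise sum of $|R|\le T$ length-$n$ daily-frequency vectors, followed by a constant number of inner products and scalar--vector operations, for a total of $O(nT)$; (ii) the head projection in Line~6 runs in $O(|\mathbb{E}|\log^3 n)$ time by the guarantee of Hegde et al.~\cite{hegde2015nearly}; (iii) forming $\Omega=\text{supp}({\bf g})\cup\text{supp}({\bf x}^i)$ in Line~7 costs $O(n)$; (iv) the box-constrained concave maximization in Line~8, solved by projected gradient ascent over the coordinates of $\Omega$, performs $l$ ascent steps, each evaluating the gradient (reusing the precomputed $\mathcal{P},\mathcal{Q},\Lambda_a,\Lambda_b$) together with a clipping to $[0,1]^n$ in $O(n)$, for a total of $O(nl)$; (v) the tail projection in Line~9 again costs $O(|\mathbb{E}|\log^3 n)$ by~\cite{hegde2015nearly}; and (vi) updating $S$ in Line~10 costs $O(n)$. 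Adding these and absorbing the $O(n)$ terms, one iteration costs $O(nT+nl+|\mathbb{E}|\log^3 n)$, hence the whole call costs $O\big(t(nT+nl+|\mathbb{E}|\log^3 n)\big)$.

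\noindent\textbf{Putting it together.} Multiplying the per-window cost by the $O(T^2)$ windows enumerated by Algorithm~\ref{algorithm:Graph-DPD} yields the stated bound $O\big(T^2\cdot t(nT+nl+|\mathbb{E}|\log^3 n)\big)$.

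\noindent\textbf{Anticipated obstacle.} Nothing here is deep; the two points that need care are (a) being explicit about the model of computation for the aggregation step --- that $\mathcal{P},\mathcal{Q}$ are (re)built per window at cost $O(nT)$, rather than silently amortized to $O(n)$ via a global prefix-sum table, which would alter the form of the bound --- and (b) correctly quoting the head- and tail-projection running times from~\cite{hegde2015nearly}, in particular that their internal binary search / prize-collecting Steiner tree calls are what produce the $\log^3 n$ factor and that the expansion and contraction parameters can be fixed so a constant number of such calls per iteration suffices. Once these are settled, the remainder is routine bookkeeping.
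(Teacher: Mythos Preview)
Your proposal is correct and follows essentially the same accounting argument as the paper: an $O(T^2)$ factor from the doubly-nested window loop, and inside, $t$ iterations of \textsc{Relaxed-GraphMP} each costing $O(nT)$ for the gradient, $O(nl)$ for the inner gradient ascent, and $O(|\mathbb{E}|\log^3 n)$ for the head and tail projections. Your treatment is in fact slightly more careful than the paper's (you make the $O(n)$ bookkeeping steps and the handling of \texttt{getPValue} explicit), but the structure is identical.
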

\begin{proof}
As the maximal time window in input graph $\mathbb{G}$ is $T$, \textsc{GraphDPD} needs $O(T^2)$ iterations in its inner loop and outer loop (From Line 4 to Line 11 in Algorithm~\ref{algorithm:Graph-DPD}) to execute \textsc{Relaxed-GraphMP} (Line 7). Suppose \textsc{Relaxed-GraphMP} needs $t$ iterations, the time complexity of each iteration has three parts: 1). $O(nT)$ the run time for calculating gradient in Line 5 of Algorithm~\ref{algorithm:GraphMP}; 2). $O(nl)$: the run time of Line 8 using gradient ascent where $l$ is the number of iterations in gradient ascent method; and 3). $O(|\mathbb{E}|\log^3 n)$: the run time of head and tail projection in Line 6 and Line 9 of Algorithm~\ref{algorithm:GraphMP}. Hence the time complexity of \textsc{Relaxed-GraphMP} is $t(n T + n l + |\mathbb{E}|\log ^3 n)$. Therefore, the total time complexity of \textsc{GraphDPD} immediately follows. As observed in our experiments, the numbers of iterations, including $t$ and $l$, scale constant with respect to $n$, and the overall time complexity of \textsc{GraphDPD} hence scales nearly linear with respect to $n$. 
\end{proof}

\section{Experiments}
Through experiments, we (1) evaluate the performance of our proposed approach in censorship pattern detection compared with baseline methods, and (2) perform case studies that analyze the censorship patterns we have found in real data. The code and datasets will be available to researchers for evaluation purposes.
\subsection{Experimental Design}
\label{section: experimental_design}

\textbf{Real world datasets:} Table \ref{table:realdata} gives a detailed description of real-world datasets we used in this work. Details of Twitter and news data access have been provided in Section 3. Daily keyword co-occurrence graphs, which integrate \textit{News} with \textit{Twitter}, are generated as described in Section~\ref{section:data preprocessing}. \\ \indent

\begin{small}
\begin{table}[h!]
\scriptsize
\centering
\caption{\scriptsize Real-world dataset used in this work. Tweets: average number of daily tweets. News: average number of daily local news articles. Number of nodes and edges are averaged over daily keyword co-occurrence graphs.}
\begin{tabular}{|c c c c c|} 
 \hline
 Country & Daily Tweets & Daily News & \# of Nodes & \# of Edges \\ [0.6ex] 
 \hline
 Mexico & 249,124 & 863 & 3,369 & 93,919 \\ 
 Venezuela & 222,072 & 169 & 2,469 & 37,740 \\[1ex] 
 \hline
\end{tabular}
\label{table:realdata}
\end{table}
\end{small}

\textbf{Data Preprocessing: } The preprocessing of the real world datasets has been discussed in detail in Section~\ref{section:data preprocessing}. In particular, we considered keywords whose day-by-day frequencies in news media and Twitter data have linear correlations above 0.15, in order to filter noisy keywords. 

\textbf{Semi-synthetic datasets:} We create semi-synthetic datasets by using the coordinates from real-world datasets and injecting anomalies. 

Ten daily keyword co-occurrence graphs are randomly selected to inject with random true anomaly connected subgraphs using a random walk algorithm~\cite{tong2006fast} with a restart probability of $0.1$. The number of nodes in the true anomaly subgraph is $x$ percentage of the number of nodes in the daily co-occurrence graph, where $x \in \{0.05, 0.1, 0.15\}$. For convenience but without loss of generality, we specified $q_t \cdot q_n=1.0$, where $q_t$ controls the scale of anomaly in tweets and $q_n$ controls scale of anomaly in local news articles. In our experiments, we set $q_t = \{1.0,2.0,\cdots,10.0,15.0,$ $\cdots,35.0\}$, and set 
$q_n = 1 / q_t$ correspondingly. \\
\indent
\textbf{Our proposed \textsc{Graph-DPD} and baseline methods: }The maximal window size $T$ and result threshold $\alpha$ in \textsc{Graph-DPD} are set as 7 and 0.05 respectively. However, our algorithm is not sensitive to the setting of $T$ and $\alpha$. We compare our proposed method with one baseline method LTSS \citep{neill2012fast}, which finds anomalous but not necessarily connected subsets of data records by maximizing a score function.
We also compare our proposed method with two state-of-art baseline methods designed specifically for connected anomalous subgraph detection, namely, EventTree \citep{rozenshtein2014event} and NPHGS \citep{chen2014non}. Model parameters are tuned following the original papers.
Specifically, for EventTree we tested $\lambda=\{0.0001$, $0.0006$, $\cdots$ ,$0.001$, $0.006$, $\cdots$, $0.010$, $0.015$, $\cdots$,$ $ $0.1$,$0.5$, $1.0$,$\cdots$ ,$20.0\}$. For NPHGS, we set the number of seed entities  $K = 400$ and  typical significance levels $\alpha_{max} = 0.15$ as the authors suggested. Since the baseline methods are designed to detect anomalies on one data source at one time, they are tested separately on \textit{Twitter} and \textit{News}, which are labeled as
$\text{LTSS}_\text{News}$, $\text{LTSS}_\text{Twitter}$, $\text{EventTree}_\text{News}$, $\text{EventTree}_\text{Twitter}$, 
$\text{NPHGS}_\text{News}$ and $\text{NPHGS}_\text{Twitter}$. Sepecifically, $\text{LTSS}_\text{Twitter}$, \\ $\text{EventTree}_\text{Twitter}$ and $\text{NPHGS}_\text{Twitter}$ are burst detection baseline methods while $\text{LTSS}_\text{News}$, $\text{EventTree}_\text{News}$, and $\text{NPHGS}_\text{News}$ are absenteeism detection baseline methods by some transformations on attributes.

\textbf{Performance Metrics:}
The performance metrics include: (1) precision (Pre), (2) recall (Rec), and (3) f-measure (F-score). Given the returned subset of nodes $S$ and the corresponding true subset of anomalies $S^*$, we can calculate these metrics as follows:
\begin{small}
\begin{equation}
 \text{Pre} = \frac{|S \cap S^*|}{ |S|}, \text{Rec} = \frac{|S \cap S^*|}{ |S^*|},\text{F-score} = \frac{2|S \cap S^*|}{ |S^*| + |S|} \nonumber   
\end{equation}
\end{small}
\indent
\textbf{Collecting labels for real data:} We collect labels for real-world instances of censorship from all abnormal absence patterns identified in \textit{News} by all baseline methods. For each abnormal absence pattern in \textit{News}, we need to first identify if there are any relevant events of interest taking place around the associated time region. An indicator of censorship pattern is considered as valid if we can find the event of interest is: 1) not reported in some local news outlets while reported in some different local news outlets, 2) reported in influential international news outlets, and 3) reported of censorship activity in local news media from other news outlets during the associated time window. The evaluation process is analyzed with the inner-annotator agreement by multiple independent annotators (5 of the authors of this paper).
\begin{figure*}[h!]
\centering
\includegraphics[scale=0.5]{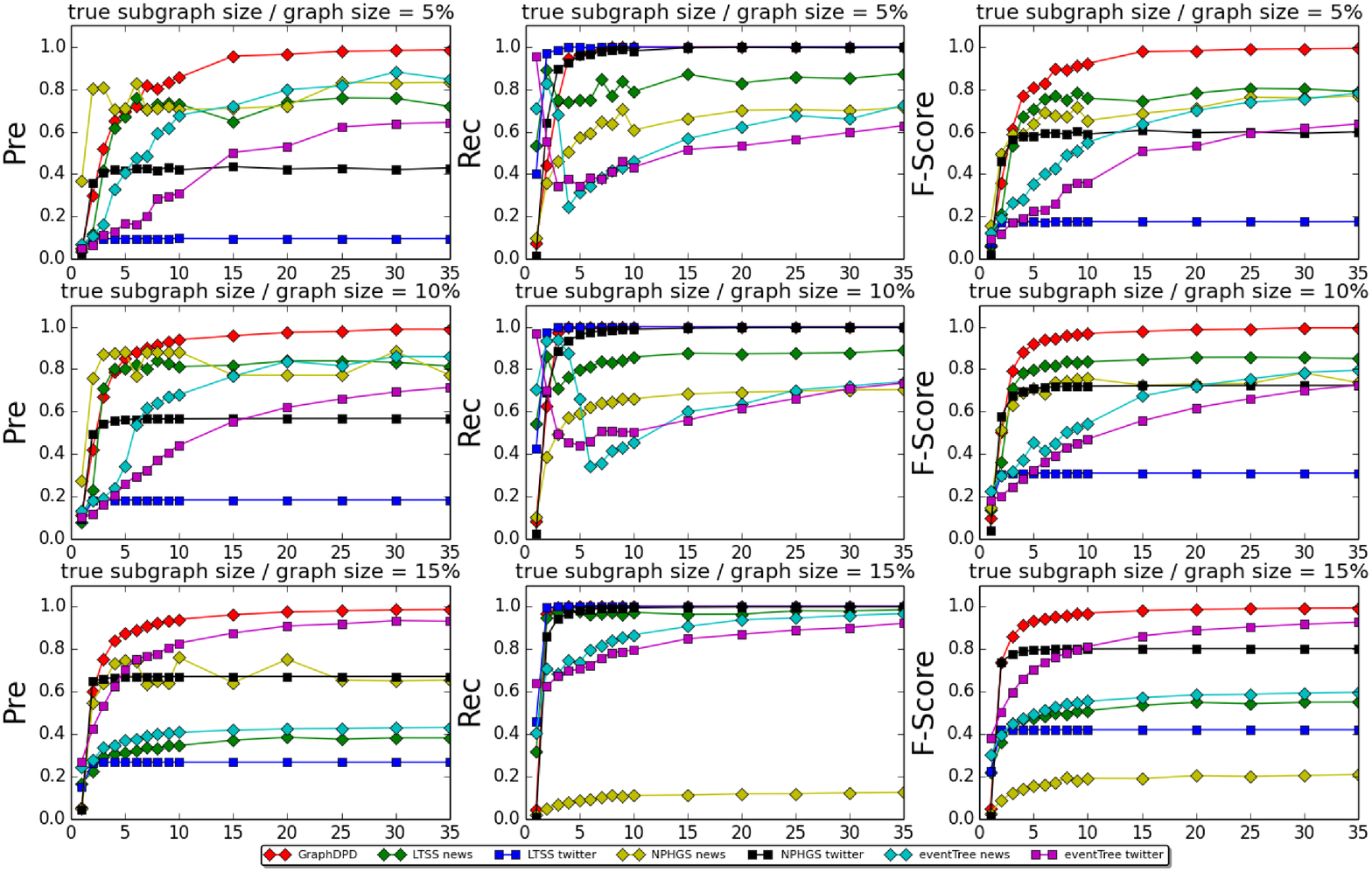}
\caption{\scriptsize Anomaly detection results (mean precision (left), recall (center), and F-measure (right) vs. perturbation intensity) for different anomaly subgraph sizes (increased size from top to bottom) in semi-synthetic data. X-axis represents $q_t$, which implies the scale of anomaly injected in \textit{Twitter}. $q_n$, which implies the scale of anomaly injected in \textit{News}, is varied following $q_t \times q_n=1.0$.}
\label{fig:simulation}
\end{figure*}
\subsection{Semi-synthetic Data Evaluation}
\label{subsection:5.2}
 \indent
We evaluate the accuracy of our approach to detect the disrupted ground truth anomalies. Fig. \ref{fig:simulation} shows the average precision, recall, and F-measure in detecting the injected anomalies using the semi-synthetic data. We find that overall our approach consistently outperforms all other baseline methods. \\ \indent
\textbf{Detection power.} (1) \textbf{Our approach.} Our approach outperforms baseline methods especially at low perturbation intensities where the detection is harder to carry out, and the performance increases gradually with the increase of perturbation intensity. In particular, it has a high accuracy of detecting injected anomalies when $q_t \geq 10$ regardless of the size of injected anomalies. Measures of recall using $\text{NPHGS}_\text{Twitter}$ are as good as our approach while the other baseline methods are significantly worse than our approach especially when the size of disrupted ground truth anomalies is small and perturbation intensity is low. However, the measures of precision using $\text{NPHGS}_\text{Twitter}$ are much worse than our approach. Considering overall F-score, $\text{NPHGS}_\text{News}$ and $\text{NPHGS}_\text{Twitter}$ look similar to our approach when perturbation intensity is low while much worse than our approach when perturbation intensity is high. When we increase $q_t$, EventTree based methods perform worse than our approach, especially when the size of ground truth anomalies is small. (2) \textbf{$\text{NPHGS}$.} When $q_t \in \{1.0, 2.0 \}$ and true ratio $x \in \{0.05,0.10\}$, the precision of $\text{NPHGS}_\text{News}$ is better than our method. However, when $x=0.15$, the recall of $\text{NPHGS}_\text{News}$ becomes quite low, which indicates its poor behavior when true subgraph is relatively large. (3) \textbf{$\text{EventTree}$.} The recall of $\text{EventTree}_\text{News}$ and $\text{EventTree}_\text{Twitter}$ is among the best when $q_t$ is less than 2.0. The reason is that results of EventTree are easier affected by noise nodes. (4) \textbf{$\text{LTSS}$.} In general, LTSS did well in average recall but poorly in average precision as the size of anomalous graph increases. Hence, our approach outperforms the baseline by detecting connected clusters of keywords.

\subsection{Real Data Evaluation}
\label{section:realdata}
We apply our proposed approach to \textit{Twitter} and \textit{News} of Mexico and Venezuela during Year 2014 as shown in Table \ref{table:realdata}. Performance evaluation on censorship detection is two-fold: (1) quantitative evaluation on accuracy, and (2) qualitative case studies.
\subsubsection{Quantitative Evaluation}
In this paper, we focus on a list of nine local news outlets for each country and test on all possible continuous time windows from 3 to 7 days starting from January 1, 2014 to December 25, 2014. For every local news outlet and time window, our approach finds the connected cluster of keywords that maximizes the objective function as defined in Eqn. \ref{equation: score function}. Consider the existence of baseline level of variation, we perform 5,000 random permutations and record the function scores associated with each randomly selected connected cluster of keywords. For every local news outlet, we remove connected clusters of keywords whose p-values are greater than a predefined significance level (0.05). The number of remaining connected cluster of keywords for each local news outlet is summarized in the second column of Table \ref{table:realdata-stat}. In order to eliminate overlapping time regions, connected clusters of keywords are ranked based on their p-values from low to high and merged if within 5 days of another connected cluster of keywords with a lower p-value. The number of distinct connected cluster of keywords for each local news outlet is summarized in the third column of Table \ref{table:realdata-stat}. For simplicity, we call each distinct connected cluster of keywords and its corresponding time window as an indicator of censorship pattern.  \\ \indent
For each country of interest, we group indicators of censorship patterns across all news outlets and merge similar indicators of censorship pattern in different news outlets. Two similar indicators of censorship pattern need to have overlapping time windows and overlapping event-relevant keywords. This yields 23 distinct indicators of censorship patterns detected in Mexico during Year 2014 and 5 of them are detected in all of the nine local news outlets. As discussed in Section \ref{section:pattern-analysis}, absence of patterns in all news outlets could be due to topical differences
between social and news media, and thus we use
the remaining 18 indicators of censorship for
evaluation. Similarly, we mined a total of 17 distinct indicators of censorship patterns in Venezuela during Year 2014 and 14 of them are considered in this study. \\ \indent

As discussed previously, existing approaches on censorship detection in social media rely on the collection of deleted posts and such approaches are not able to detect self censorship in news media. Hence, we apply three anomaly detection baseline methods, LTSS, NPHGS, and EventTree, to detect anomalies in \textit{News} on graphs with all possible time windows from 3 days to 7 days using starting days from January 1, 2014 to December 25, 2014. The parameters used for the baselines are set similarly as in Section \ref{section: experimental_design}. The baseline methods can find anomalous subgraphs according to their own score functions; however, they are not able to evaluate the significance level of each subgraph. For the purpose of comparison, we remove duplicate subgraphs with overlapping time regions in the same manner as our method. The remaining subgraphs are ranked from the best to the worst according to their function values and top 18 subgraphs in Mexico and top 14 subgraphs in Venezuela are selected to compare with our method. \\ \indent

Table \ref{table:evaluation} summarizes the comparison of false positive rates in censorship detection and our method outperforms LTSS, NPHGS, and EventTree. The baseline methods, which are designed for event detection instead of censorship detection, will capture all falling patterns in \textit{News}. 
In particular, the baseline methods are not able to successfully differentiate censored events from non-censored events, e.g., the normal end of attention paid to breaking events. 
Table \ref{table:top5} summarizes a list of example instances of censorship identified by our approach in Mexico and Venezuela with significance level $\leq 0.05$. We will next evaluate these instances.
\begin{small}
\begin{table*}
\scriptsize
\centering
\caption{\scriptsize A list of local news media used in our work and number of connected clusters of keywords detected by our approach during Year 2014}
\label{table:realdata-stat}
\begin{tabular}{|l|c|c|c|c|}
\hline
\multicolumn{3}{|c|}{{\bf Mexico}}            \\ \hline
News Media    & \pbox{2cm}{\# of results with p-value $\leq 0.05$}   & \# of distinct results   \\ \hline
El Imparcial & 56 &  10  \\ \hline
El Mexicano Gran Diario Regional    & 77   &  15    \\ \hline
El Siglo de Torreon        & 43 &  9 \\ \hline
El Universal in Mexico City  &  52  & 11 \\ \hline
El Informador   & 59  & 11  \\ \hline
Noroeste  &  65  & 13  \\ \hline
Novedades Acapulco  &  63  & 12  \\ \hline
Correo  &  58 & 11  \\ \hline
Vanguardia   & 58 & 11  \\ \hline
\multicolumn{3}{|c|}{{\bf Venezuela}}               \\ \hline
News Media    & \pbox{2cm}{\# of results with p-value $\leq 0.05$}   & \# of distinct results   \\ \hline
El Tiempo in Trujillo  & 59   & 11  \\ \hline
El Impulso & 58   & 9  \\ \hline
El Mundo    & 45   & 8  \\ \hline
El Nacional        & 57   & 11   \\ \hline
El Tiempo in Anzoategui   & 63   & 12  \\ \hline
El Universal in Caracas   & 46   & 7  \\ \hline
La Verdad   &   69   &  12  \\ \hline
Tal Cual  & 68   & 12  \\ \hline
Ultimas Notícias   & 76   &  14  \\ \hline
\end{tabular}
\end{table*}
\end{small}
\begin{small}
\begin{table}[h!]
\scriptsize
\centering
\caption{\scriptsize Comparison of false positive rates in censorship detection between \textsc{GraphDPD} and three baseline methods: LTSS, NPHGS, and EventTree on real data of Mexico and Venezuela during Year 2014.}
\begin{tabular}{|c c c c c|} 
 \hline
 Country & LTSS & NPHGS & EventTree & \textsc{GraphDPD} \\ [0.6ex] 
 \hline
 Mexico & 0.722  &  0.667 & 0.556 & 0.278 \\ 
 Venezuela & 0.714  &   0.786 & 0.643 & 0.357 \\[1ex] 
 \hline
\end{tabular}
\label{table:evaluation}
\end{table}
\end{small}

\begin{small}
\begin{table*}
\tiny
\centering
\caption{\scriptsize Example indicators of censorship identified by our approach in Mexico and Venezuela during Year 2014 (with significance level $\leq 0.05$)}
\vspace{-2mm}
\label{table:top5}
\begin{tabular}{| c | c | c | c |}
\hline
\multicolumn{4}{|c|}{Mexico} \\ \hline
Date   & Example censored keywords  &  \pbox{3cm}{Example local news media \\ detected with censorship} &  Reasons for censorship in news media     \\ \hline
\textcolor{orange}{2014-05-01}  & \pbox{5cm}{\textcolor{red}{reforma(reform)},  \textcolor{green}{gasolina(petrol)},  \textcolor{blue}{educación(education)}}     &  Noroeste   & \pbox{7cm}{Tens of thousands of people marched in Mexico City on \textcolor{orange}{Labor Day} to protest the new laws, which target at Mexico's \textcolor{blue}{education} system and \textcolor{red}{opening up} the state controlled \textcolor{green}{oil} industry to foreign investors.} \\ \hline
2014-09-27  & \pbox{5cm}{\textcolor{orange}{ayotzinapa}, \textcolor{magenta}{iguala}, \textcolor{red}{normalistas}, \textcolor{blue}{desaparecidos(missing)},  \textcolor{green}{detenidos(detained)}, protesta(protest)}   &  El Mexicano Gran Diario Regional   & \pbox{7cm}{43 \textcolor{red}{students} from the \textcolor{orange}{Ayotzinapa} Rural Teachers' College went \textcolor{blue}{missing} and \textcolor{green}{kidnapped} in \textcolor{magenta}{Iguala} on September 26, 2014. This incident became the biggest political and public security scandal  of Mexican President.}  \\ \hline
2014-11-10  & \pbox{5cm}{\textcolor{orange}{ayotzinapa}, \textcolor{red}{estudiantes(students)}, \textcolor{red}{normalistas}, \textcolor{blue}{desaparecidos(missing)}, \textcolor{green}{protesta(protest)}, militares(military), \textcolor{magenta}{iguala}}  &   El Mexicano Gran Diario Regional    & \pbox{7cm}{\textcolor{green}{Protests} in Mexico City demanding the return of the \textcolor{blue}{missing} \textcolor{red}{students}, who came from \textcolor{orange}{Ayotzinapa} Rural Teachers' College and went \textcolor{blue}{missing} in \textcolor{magenta}{Iguala} on September 26, 2014, turned violent for the first time. Protesters set fire at the National Palace and some of them were arrested.}  \\ \hline
\multicolumn{4}{|c|}{Venezuela} \\ \hline
2014-02-18  &  \pbox{5cm}{represión(repression), disparó(shooting), \textcolor{blue}{marchamos(march)}, heridos(wounded), nicolasmaduro, armados(armed), \textcolor{red}{leopoldolopez}, \textcolor{orange}{apresar(arrest)}, \textcolor{green}{ntn24}}   &  Ultimas Notícias  & \pbox{7cm}{Mass \textcolor{blue}{protests} led by opposition leaders, including \textcolor{red}{Leopoldo López}, occurred in 38 cities across Venezuela asking for the release of the \textcolor{orange}{arrested students}. Colombian TV news channel NTN24 is taken off air for airing anti-government demonstrations.}   \\ \hline
\textcolor{orange}{2014-05-01}  &  \pbox{5cm}{muertes(deaths), cambio(change), \textcolor{red}{caracas}, presidente(president), \textcolor{orange}{labor}}    &   El Tiempo in Anzoategui  & \pbox{7cm}{Thousands of Venezuelans demonstrated in \textcolor{red}{Caracas} to commemorate \textcolor{orange}{Labor Day} and denounce shortages. Some protesters were injured when dispersed by authorities.}     \\ \hline
2014-08-12   &   \pbox{5cm}{gubernamental(government), \textcolor{red}{anticontrabando}, contrabando, \textcolor{green}{ébola}, muerte(death)}   &  El Nacional  &   \pbox{7cm}{Venezuela is the only country in Latin America with increasing number of \textcolor{red}{malaria}. With the spreading of \textcolor{green}{Ebola} virus, Venezuela is one of the most vulnerable countries in Latin America due to the lack of basic supplies, water, and electricity.}   \\ \hline
\end{tabular}
\end{table*}
\end{small}

\subsubsection{Case Studies}

\textbf{Mexico May 2014}. In December 2013, Mexican president Peña Nieto and Congress amended the Constitution, opening up the state controlled oil industry to foreign investors. Tens of thousands of protesters demonstrated in Mexico City on Labor Day (May 1) to protest against the energy reform, fearing the total privatization of the energy sector and higher energy prices \footnote{http://www.wbur.org/hereandnow/2014/05/02/may-day-mexico}
. In additions, protesters were also unsatisfied with the 2013 reforms of the educational sector. However, this incident was not reported in a number of influential newspapers in Mexico, including but not limited to Noroeste, Vanguardia, El Siglo de Torreon, Correo, El Imparcial, El Informador, Novedades Acapulco, and El Universal, which is an indicator of censorship. Fig. \ref{fig:casewc-a} shows a cluster of censored keywords detected by our method around May 1, 2014 in Mexico. 
Our approach has successfully captured consistent censorship patterns among a collection of relevant keywords, which well describe the topics around which the May 1 demonstrations were organized (reforma, gasolina, dinero, educación, escuela). \\ \indent

\textbf{Venezuela February 2014}. As a result of the collapse of the price of oil (main export of the country), a decade of disastrous macroeconomic policies and growing authoritarianism Venezuela suffered from inflation, shortages of basic foodstuffs and other necessities, and increasing political frustration. 
Mass opposition protests led by opposition leaders demanding the release of the students occurred in 38 cities across Venezuela on February 12, 2014. While this incident was reported by a number of major international newspapers,
there was significant censorship in the country's largest daily Ultimas Notícias, an event
reported by a number of international news outlets \footnote{http://articles.chicagotribune.com/2014-02-19/news/sns-rt-us-venezuela-protests-media-20140219\_1\_live-coverage-president-nicolas-maduro-news-channel-globovision/2} \footnote{http://www.nybooks.com/daily/2014/04/09/venezuela-protests-censorship/} \footnote{https://panampost.com/marcela-estrada/2014/02/13/venezuela-opposition-rallies-end-in-bloodshed-riots/}
. The day after the protests President Maduro announced that Colombian TV news channel NTN24, which had been the only station to broadcast the protests to within Venezuela, was being removed from the grid of Venezuelan cable operators for airing anti-government demonstrations. Fig. \ref{fig:casewc-b} shows a cluster of censored keywords detected by our method around February 18, 2014 in Venezuela, which well describes the populations involved (estudiante, chavistas, opositores, leopoldolopez) and the target of the demonstrations (nicolasmaduro). \\ \indent

\begin{figure}[!t]
\small
    \centering
    \begin{subfigure}[b]{0.22\textwidth}
            \centering
            \includegraphics[width=\textwidth]{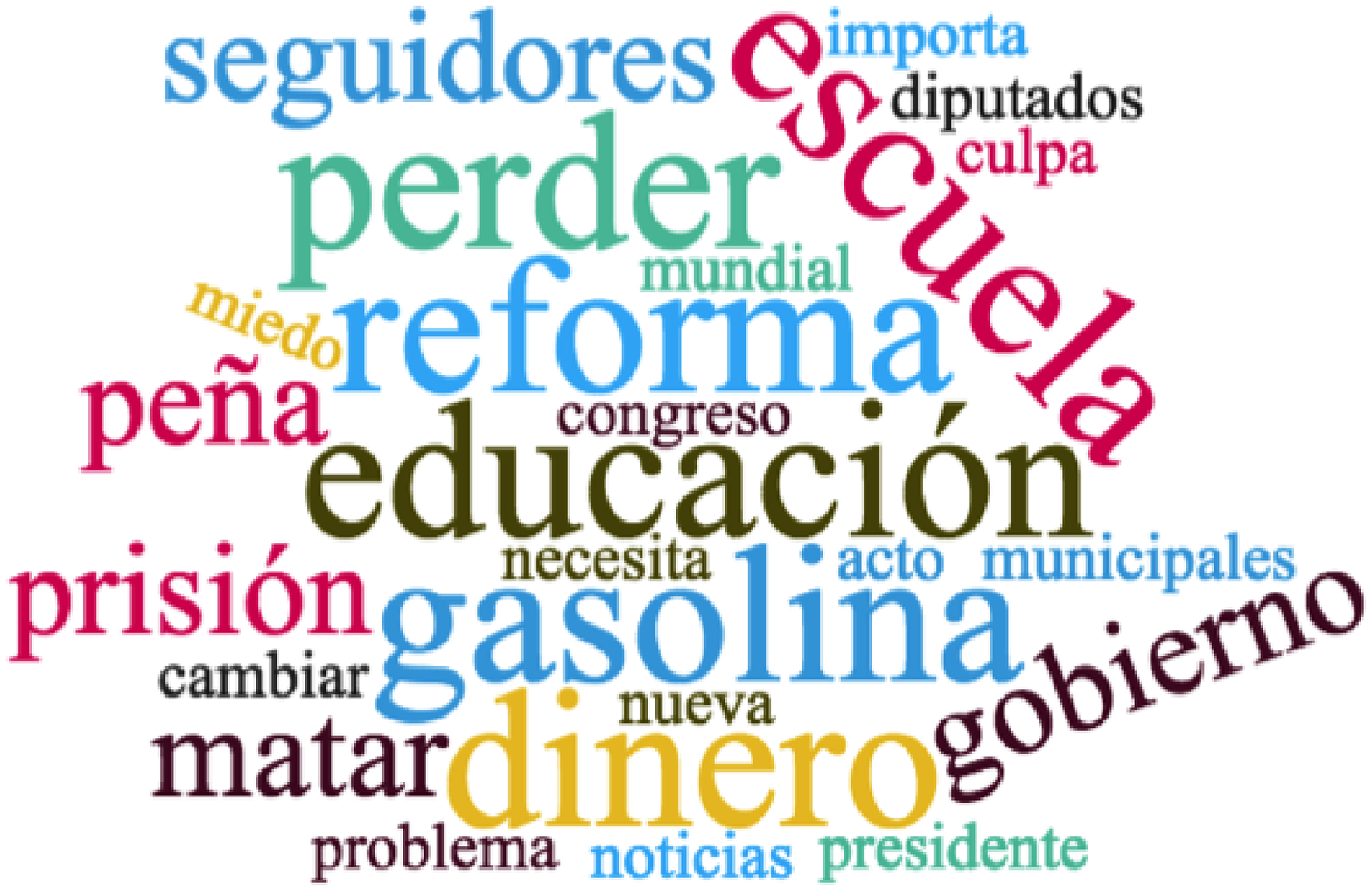}
            \caption{\scriptsize Mexico 2014-05-01}
    \label{fig:casewc-a}
    \end{subfigure}
    \begin{subfigure}[b]{0.22\textwidth}
            \centering
            \includegraphics[width=\textwidth]{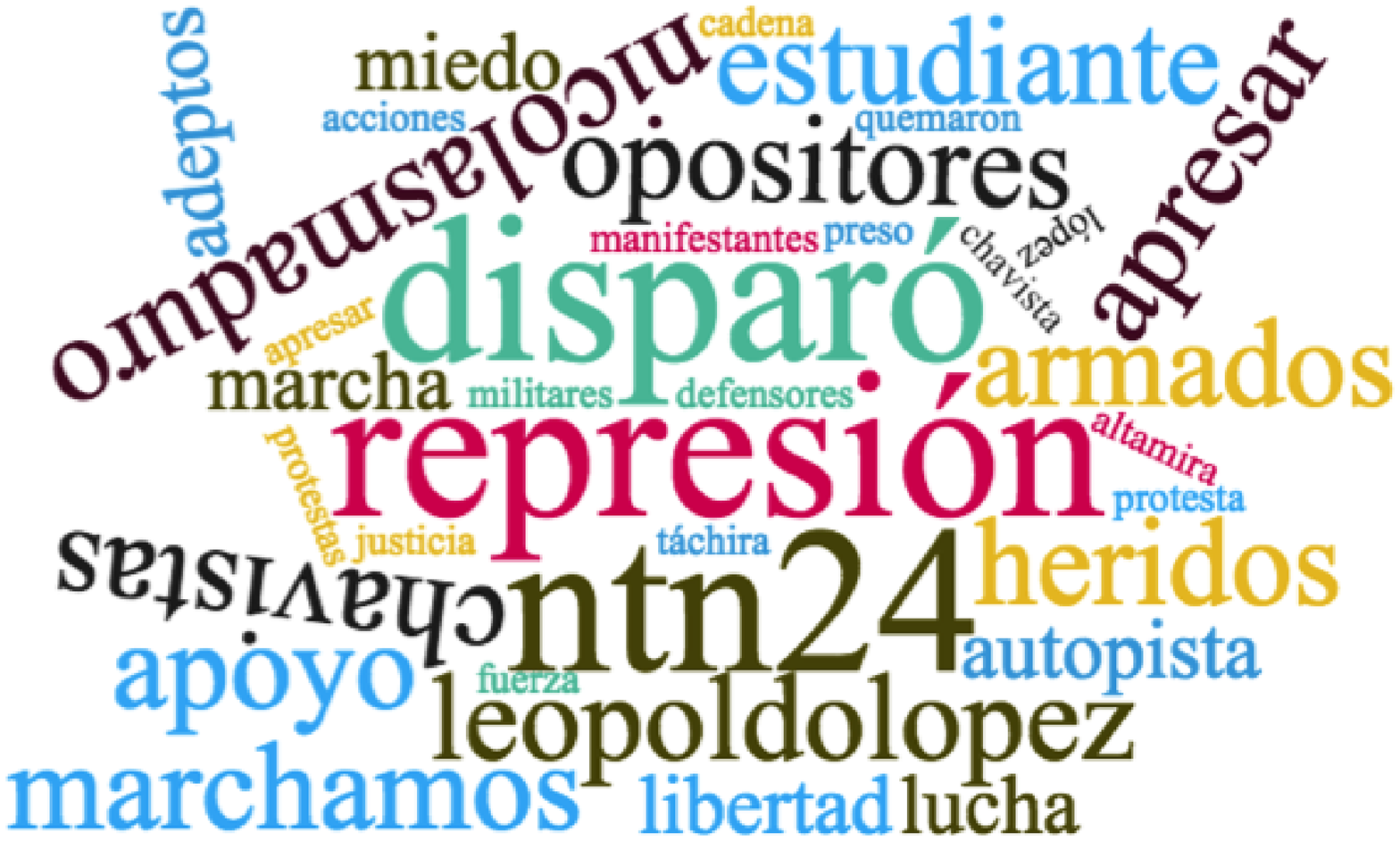}
    \caption{\scriptsize Venezuela 2014-02-18}
    \label{fig:casewc-b}
    \end{subfigure}    
    \caption{\scriptsize Word cloud representing censored keywords in \textit{News} identified by our method}
    \label{fig:casewc}
\end{figure}

\section{Conclusion}
In this paper, we have presented a novel unsupervised approach to identify censorship patterns in domestic news media using social media as a sensor. 
Through comprehensive experiments on semi-synthetic datasets, we showed that our approach outperforms popular anomalous subgraph detection methods: LTSS, EventTree, and NPHGS, especially when the perturbation intensity is low. Analyzing real-world datasets in Mexico and Venezuela during Year 2014 demonstrates that our approach is capable of accurately detecting real-world censorship patterns in domestic newspapers. In future work, we are interested in generalizing censorship detection to other countries and to undertake censorship forecasting.

\def\UrlBreaks{\do\/\do-}
\setlength{\bibsep}{0pt plus 0.3ex}


\end{document}